\begin{document}

\title{Efficient Truss Maintenance in Evolving Networks}
%

\numberofauthors{3}

\author{
%
%
%
\alignauthor
Rui Zhou\\
       \affaddr{Center for Applied Informatics}\\
       \affaddr{College of Engineering \& Science}\\
       \affaddr{Victoria University}\\
       \affaddr{Melbourne, Australia}\\
       \email{Rui.Zhou@vu.edu.au}
\alignauthor
Chengfei Liu\\
       \affaddr{Faculty of Information and Communication Technologies}\\
       \affaddr{Swinburne University of Technology}\\
       \affaddr{Melbourne, Australia}\\
       \email{cliu@swin.edu.au}
\alignauthor
Jeffrey Xu Yu\\
       \affaddr{Department of Systems Engineering \& Engineering Management}\\
       \affaddr{The Chinese University of Hong Kong}\\
       \affaddr{Hong Kong, China}\\
       \email{yu@se.cuhk.edu.hk}
\and
\alignauthor
Weifa Liang\\
       \affaddr{Research School of Computer Science}\\
       \affaddr{Australian National University}\\
       \affaddr{Canberra, Australia}\\
       \email{wliang@cs.anu.edu.au}
\alignauthor
Yanchun Zhang\\
       \affaddr{Centre for Applied Informatics}\\
       \affaddr{College of Engineering \& Science}\\
       \affaddr{Victoria University}\\
       \affaddr{Melbourne, Australia}\\
       \email{Yanchun.Zhang@vu.edu.au}
}

\newtheorem{theorem}{Theorem}
\newtheorem{corollary}[theorem]{Corollary}
\newtheorem{example}{Example}
\newtheorem{lemma}{Lemma}
\newtheorem{observation}{Observation}
\newtheorem{conjecture}{Conjecture}
\newtheorem{definition}{Definition}

\newcommand{\includeonefigure}[3]
{
\begin{figure}[t]
    \centering
    #1
    \caption{#2}
    \label{#3}
\end{figure}
}

\maketitle

\begin{abstract}
Truss was proposed to study social network data represented by graphs. A $k$-truss of a graph is a cohesive subgraph, in which each edge is contained in at least $k-2$ triangles within the subgraph. While truss has been demonstrated as superior to model the close relationship in social networks and efficient algorithms for finding trusses have been extensively studied, very little attention has been paid to truss maintenance. However, most social networks are evolving networks. It may be infeasible to recompute trusses from scratch from time to time in order to find the up-to-date $k$-trusses in the evolving networks. In this paper, we discuss how to maintain trusses in a graph with dynamic updates. We first discuss a set of properties on maintaining trusses, then propose algorithms on maintaining trusses on edge deletions and insertions, finally, we discuss truss index maintenance. We test the proposed techniques on real datasets. The experiment results show the promise of our work.

\end{abstract}

\section{Introduction}

Truss was proposed by Jonathan Cohen in~\cite{DBLP:journals/web/truss} in 2008 to identify cohesive subgraphs for social network analysis. The motivation originated from social structure is that if two persons are strongly tied, they should share enough common friends. As such, we have the definition of $k$-truss as follows:
\begin{definition}
A $k$-truss in a graph $G=(V,E)$ is a non-trivial\footnote{A trivial graph is not a 2-truss due to this non-trivial condition.} connected subgraph $G_s=(V_s,E_s)$, where each edge in $E_s$ is contained in at least $k-2$ triangles of $G_s$, here $V_s \subseteq V$ and $E_s \subseteq E$.
\end{definition}
Here, an edge $(a,b)$ is contained in $k-2$ triangles in $G_s$ equals that nodes $a$ and $b$ have $k-2$ common neighbors in $G_s$. Usually, we are interested in the maximal $k$-trusses, i.e. the $k$-trusses which are not proper subgraphs of other $k$-trusses, otherwise there may be too many overlapping trusses resulting from nodes combination. 

Consider a social network as a graph, by identifying maximal $k$-trusses, we can identify a set of cohesive groups or communities. The granularity can be adjusted through the setting of $k$. With the identified communities, analysis can be done to extract local information from the communities and follow-up commercial strategies, such as personalized advertising~\cite{DBLP:journals/ijcsa/MeyerBOS11}, viral marketing~\cite{RePEc:ucp:jconrs:v:14:y:1987:i:3:p:350-62}, can then be devised based on collected information. In fact, the problem of finding cohesive subgraphs has been studied for a long time. There are many other models to identify cohesive subgraphs, e.g., clique, n-clique, quasi-clique, n-clan, n-club, k-plex and k-core. Despite many such models have been proposed, to the best of our knowledge, very little attention has been paid to maintaining the cohesive subgraphs in evolving networks. However, real networks are updated frequently. 
Every second, there may be people joining and leaving a network, setting up or eliminating links to others. For example, the membership of LinkedIn grows by approximately two new members every second~\cite{LinkedIn}. As a result, it is imperative to maintain the discovered interesting communities (cohesive subgraphs) dynamically if possible rather than to recompute the communities from scratch each time to respond to the updating of nodes and edges of the network. In this paper, we focus on dynamically maintaining maximal trusses in evolving networks.

We first introduce other forms of cohesive subgraphs and then propose the challenge of maintaining trusses. 

The basic notion is clique~\cite{RePEc:spr:psycho:v:14:y:1949:i:2:p:95-116}. A clique is a subgraph where every node has links to other nodes in the subgraph. The definition of clique is too rigid. There are relaxed forms of cliques, such as quasi-clique~\cite{DBLP:conf/latin/AbelloRS02, DBLP:journals/tods/ZengWZK07} and n-clique~\cite{n_clique}. Quasi-clique relaxes the definition of clique by asking each node to connect to a certain percentage of other nodes. Quasi-clique can be defined based nodes~\cite{DBLP:journals/tods/ZengWZK07} and edges~\cite{DBLP:conf/latin/AbelloRS02}. $n$-clique relaxes the distance between any two nodes in a clique from 1 to $n$, i.e. two nodes can be considered as connected if they are connected by at most $n$ hops of edges. $n$-clan~\cite{n_clan} is a restricted form of $n$-clique by further imposing a constraint on the diameter. $n$-club~\cite{n_clan} removes the $n$-clique requirement from the $n$-clan, asking for only diameter constraint. $k$-plex~\cite{Seidman1978} is also a relaxed form of clique. It requires each node in a clique of $x$ nodes to have $x-k$ links to other nodes. Here, $k$ can be considered as a tolerant factor. $k$-plex is similar to quasi-clique. The difference is that quasi-clique asks for a percentage but $k$-plex asks for a number of links. The common feature of the above notions is that their computation are all NP-hard. 

Another interesting notion, recently attracting the attention of researchers, is $k$-core~\cite{Seidman1983269}. $k$-core is the largest subgraph where each node has its degree at least $k$.  $k$-core can somehow catch a cohesive graph, because in such a graph, every node has at least $k$ links to other nodes. The computation and maintenance of $k$-core is polynomial. However, $k$-core may not be very accurate because the existence of bridges. 
This has been observed in~\cite{DBLP:conf/edbt/ZhouLYLCL12}, where the authors use maximal induced k-connected subgraphs (or $k$-strong subgraphs termed in~\cite{DBLP:journals/corr/cs-DS-0207078}) to model cohesive subgraphs. The computation of k-strong subgraphs is also expensive, though polynomial. The maintenance of k-strong subgraphs is also time-consuming~\cite{DBLP:journals/tpds/LiangBS01}. 

Considering the computation and maintenance difficulty of $k$-strong subgraphs, we are expecting a model that should be less expensive to compute and maintain, but still preserve the good property of $k$-strong subgraphs. $k$-truss fits in this category. Firstly, a $k$-truss is both a $(k-1)$-core~\cite{Wang:2012:TDM:2311906.2311909} and a $(k-1)$-connected subgraph, since a $(k-1)$-core is a $(k-1)$-connected graph. Secondly, the computation of $k$-truss is $O(m^{1.5})$, where $m$ is the number of edges. It is low polynomial. However, it is still infeasible to recompute the trusses frequently on evolving networks. To tackle this problem, we study how to maintain $k$-trusses in this paper. Our result is that maintaining $k$-truss is $O(|E_l|)$, where $E_l$ is the set of affected edges whose truss numbers need to be updated. In a large graph, we usually have $|E_l| << |E|$. This means $k$-truss may be a promising structure to model cohesive subgraphs in practice.

\begin{example}
Fig.~\ref{fig:truss1} shows the relationships of actors studied by S. Freeman and L. Freeman~\cite{DBLP:journals/web/truss}. Fig.~\ref{fig:truss1} shows the entire network. Fig.~\ref{fig:truss2} shows the maximal 4-trusses. There are two in total. As the example shows, in either 4-truss, each edge is in at least 2 triangles. Fig.~\ref{fig:truss3} shows the 3-core of the network, where each node has at least 3 neighbors. The 3-strong subgraph of the network is the same as the 3-core in this example.  
\end{example}

\begin{figure}[t]
  \centering
  \subfigure[Reciprocated Friendship and Meeting Relations]{\label{fig:truss1}
    \includegraphics[height=64mm,width=80mm]{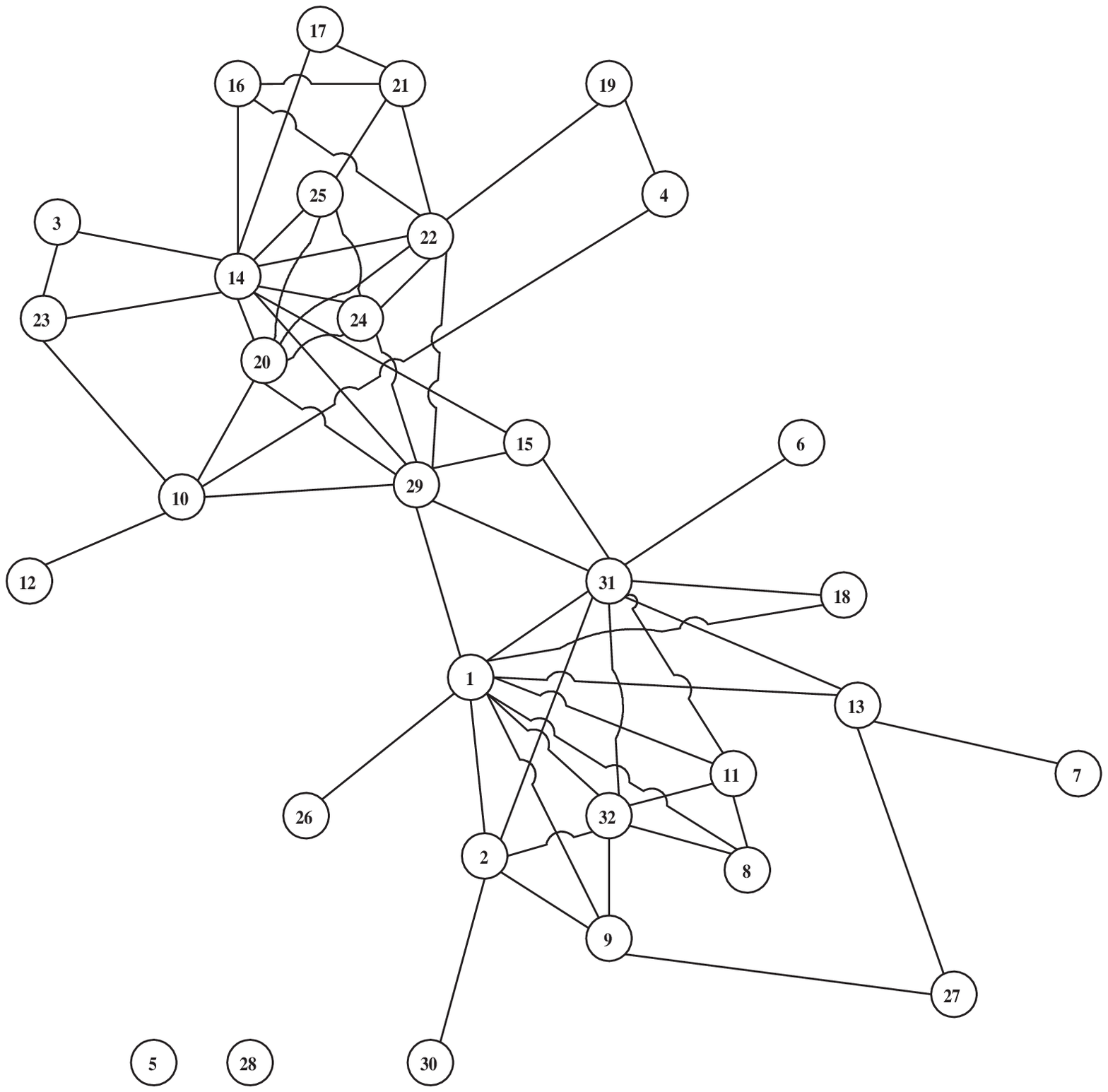}}
  \subfigure[4-trusses]{\label{fig:truss2}
    \includegraphics[height=45mm,width=40mm]{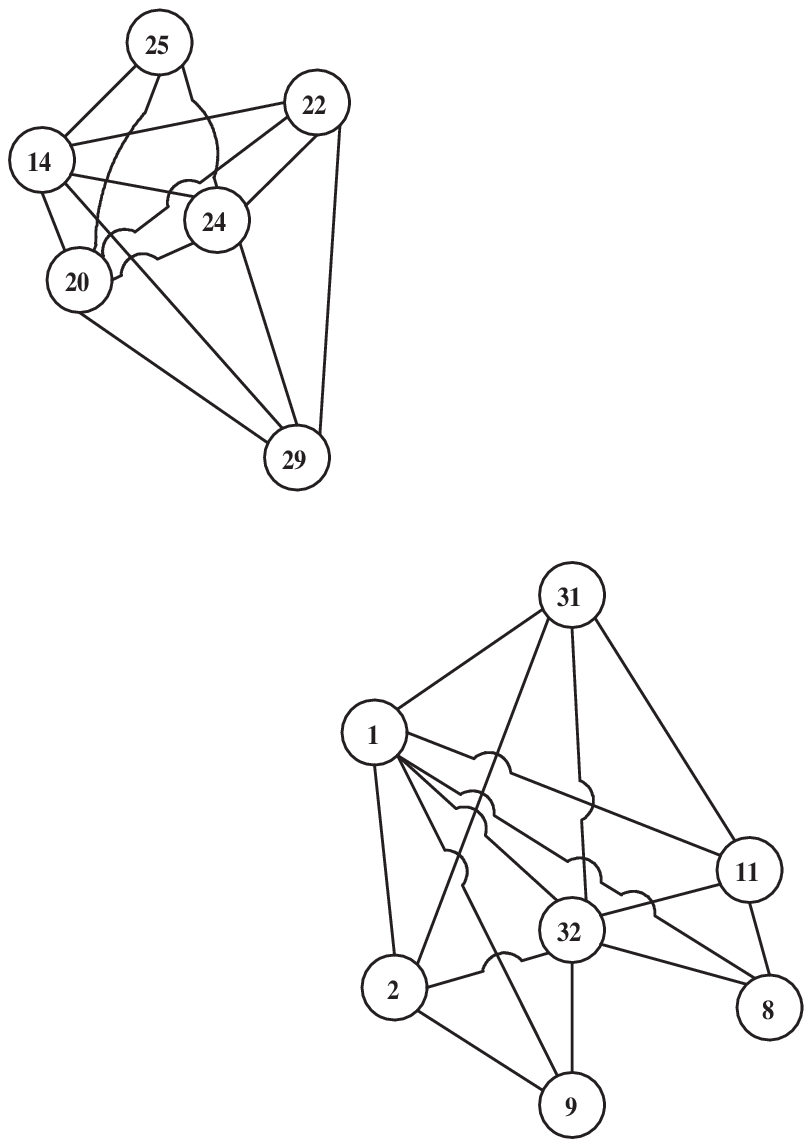}} 
  \subfigure[3-cores]{\label{fig:truss3}
    \includegraphics[height=45mm,width=40mm]{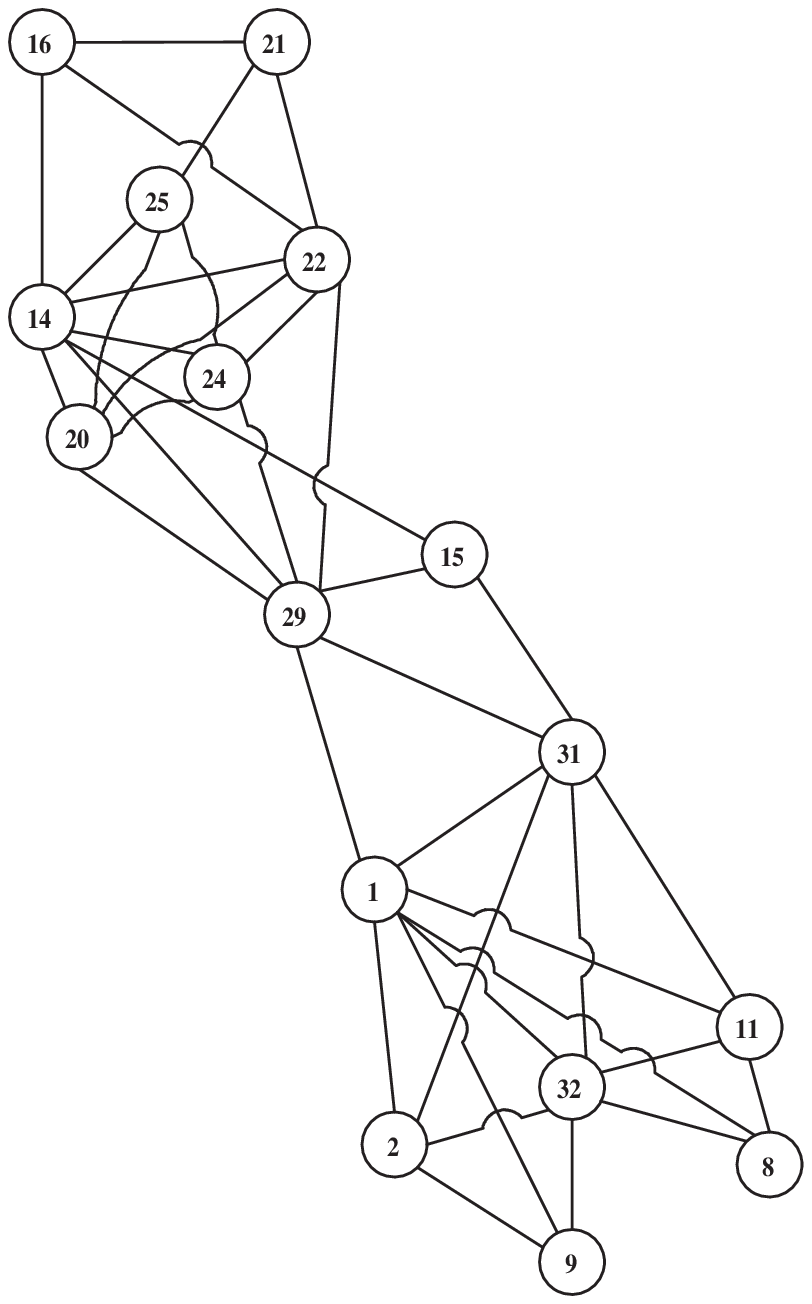}}
\caption{Examples of $k$-trusses and $k$-cores}
\label{fig:truss}
\end{figure} 
  
Finally, we summarize the main contributions of this paper as follows:
\begin{itemize}
    \item We are the first to study the truss maintenance problem.

    \item We discuss a set of important truss maintenance properties by answering the following questions ``which trusses may be affected'', ``how much effect will an update produce'', ``how far will the effect spread''. 
    
    \item We devise effective truss maintenance algorithms for edge deletions and edge insertions. We prove the correctness of the algorithms and analyze the complexity of the maintenance algorithms.
    
    \item We propose to build truss indexes to speed up the evaluation of truss queries. We also introduce how to maintain the indexes.

    \item The truss maintenance algorithms and truss index maintenance algorithms are demonstrated as efficient by the experiments.
\end{itemize}

Here is a roadmap of this paper. In Section~\ref{sect:prel}, we provide a formal definition of the problem, and introduce some necessary notations. In Section~\ref{sect:properties}, we discuss the properties of truss maintenance. In Section~\ref{sect:algorithms}, we show the algorithms of performing edge deletions or insertions. In Section~\ref{sect:index}, we show how to build and maintain truss indexes. Experiment results are shown in Section~\ref{sect:experiments}. Related works and conclusions are in Section~\ref{sect:related} and Section~\ref{sect:conclusion} respectively.

\section{Preliminaries}
\label{sect:prel}

In this work, we consider only undirected, unweighted simple graphs. The truss definition and maximal truss definition have already been introduced in Section 1. Apparently, if an edge $e$ is in a $k$-truss, it must be in a $(k-1)$-truss which is a supergraph of the $k$-truss. We define $\phi(e)$ as the maximal truss number of the trusses that edge $e$ can reside in. This truss is denoted as $T_{\phi(e)}$. We define the global support of an edge $e$ as the number of triangles containing the edge in the graph $G=(V,E)$, denoted as $sup(e,G)$. Similarly, we define the local support of an edge $e$ in a subgraph $G_s=(V_s,E_s)$ ($V_s \subseteq V$ and $E_s \subseteq E$) as the number of triangles containing $e$ in $G_s$, denoted as $sup(e,G_s)$.


\begin{lemma}
Given a graph $G$, we have the following:
\[
\phi(e) \leq sup(e,T_{\phi(e)}) + 2 \leq sup(e,G) + 2
\]

\label{lemm:support}
\end{lemma}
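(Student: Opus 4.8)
The plan is to establish the two inequalities separately, both following directly from the definitions of truss number, local support, and global support.

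First I would prove the left inequality $\phi(e) \leq sup(e, T_{\phi(e)}) + 2$. By definition, $T_{\phi(e)}$ is a $\phi(e)$-truss containing $e$, which means that \emph{every} edge of $T_{\phi(e)}$ — in particular $e$ itself — is contained in at least $\phi(e) - 2$ triangles within $T_{\phi(e)}$. Translating ``contained in at least $\phi(e)-2$ triangles of $T_{\phi(e)}$'' into the support notation gives $sup(e, T_{\phi(e)}) \geq \phi(e) - 2$, which rearranges to exactly $\phi(e) \leq sup(e, T_{\phi(e)}) + 2$. This step is essentially a restatement of the $k$-truss definition applied to the specific truss $T_{\phi(e)}$ and the specific edge $e$.

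Next I would prove the right inequality $sup(e, T_{\phi(e)}) + 2 \leq sup(e, G) + 2$, which is equivalent to $sup(e, T_{\phi(e)}) \leq sup(e, G)$. Since $T_{\phi(e)} = (V_s, E_s)$ is a subgraph of $G = (V, E)$ with $V_s \subseteq V$ and $E_s \subseteq E$, any triangle of $T_{\phi(e)}$ that contains $e$ is also a triangle of $G$ that contains $e$: its three vertices lie in $V_s \subseteq V$ and its three edges lie in $E_s \subseteq E$. Hence the set of triangles through $e$ in $T_{\phi(e)}$ injects into the set of triangles through $e$ in $G$, so the former cannot be larger, giving $sup(e, T_{\phi(e)}) \leq sup(e, G)$. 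Adding $2$ to both sides yields the claim, and chaining the two inequalities completes the proof.

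There is no real obstacle here — the lemma is a direct consequence of unwinding the definitions. The only point requiring a modicum of care is making explicit the correspondence ``edge $e$ lies in at least $k-2$ triangles of $G_s$'' $\iff$ ``$sup(e, G_s) \geq k-2$'', which was already noted in the preliminaries, and observing that support is monotone under taking subgraphs. If one wanted to be pedantic, one could also note the edge case where $e$ lies in no nontrivial truss at all, but since $e \in E$ always lies in some $2$-truss (any connected subgraph containing $e$ with at least one edge qualifies, modulo the non-trivial condition), $\phi(e)$ and $T_{\phi(e)}$ are well defined whenever we speak of them.
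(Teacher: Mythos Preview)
Your proof is correct and is exactly the direct unwinding of definitions one would expect; the paper in fact states this lemma without proof, treating both inequalities as immediate from the definition of $k$-truss and the monotonicity of support under subgraphs, which is precisely what you spell out.
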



To help with the discussions, we introduce some notions. Let $a$, $b$ be two nodes, $e=(a,b)$ be an edge to be deleted or inserted between nodes $a$ and $b$, $n(a)$ and $n(b)$ be the neighbor nodes of $a$ and $b$ respectively, we define $S_{a,b} = n(a) \cap n(b)$ as the common neighbors of $a$ and $b$, define $E_{S_{a,b}\leftrightarrow \{a,b\}} = \{(n_1,n_2): n_1 \in S_{a,b}, n_2 \in \{a,b\} \}$ as the set of edges between $S_{a,b}$ and $\{a,b\}$. Furthermore, we define $k_{max}(e)$ and $k_{min}(e)$ as the maximum and minimum truss number of the edges in  $E_{S_{a,b}\leftrightarrow \{a,b\}}$, i.e. $k_{max}(e) = max\{\phi(e') : e' \in E_{S_{a,b}\leftrightarrow \{a,b\}}  \}$ and $k_{min}(e) = min\{\phi(e') : e' \in E_{S\leftrightarrow \{a,b\}} \}$. 




A notation table is given in Table~\ref{tab:notations} for easy reference. In the main part of our discussion, we consider one insertion and one deletion only, multiple insertions and deletions can be regarded as repeating single insertion and deletion.

\begin{table}[t]
\small\small \label{tab:notations} \caption{Notations and Descriptions}

\begin{center}

{\footnotesize

\begin{tabular}{l|l}

\hline
Notation &  Description\\
\hline \hline
$\phi(e)$ & the maximal truss number of the trusses\\
          & that edge $e$ resides in\\
\hline
$T_{\phi(e)}$ & the maximal $\phi(e)$-truss that \\
              & contains edge $e$ \\
\hline
$n(a)$      & the set of neighbor nodes of node $a$ \\
\hline
$S_{a,b}$ & the common neighbors of node $a$ and $b$: \\
& $n(a) \cap n(b)$ \\
\hline
$E_{S_{a,b}\leftrightarrow \{a,b\}}$ &  the set of edges between $S_{a,b}$ and $\{a,b\}$: \\ 

& $\{(n_1,n_2): n_1 \in S_{a,b}, n_2 \in \{a,b\} \}$ \\
\hline
$k_{max}(e)$ & the maximal truss number of the edges in \\
& $E_{S_{a,b}\leftrightarrow \{a,b\}}$: $max\{\phi(e') : e' \in E_{S_{a,b}\leftrightarrow \{a,b\}}  \}$ \\
\hline
$k_{min}(e)$ & the minimal truss number of the edges in \\
& $E_{S_{a,b}\leftrightarrow \{a,b\}}$: $min\{\phi(e') : e' \in E_{S_{a,b}\leftrightarrow \{a,b\}}  \}$ \\
\hline
\end{tabular}
}

\end{center}
\end{table}


\section{Truss Maintenance Properties}
\label{sect:properties}

In this section, we discuss some properties for truss maintenance under one insertion and one deletion. Before we proceed introducing the algorithms in Section~\ref{sect:algorithms} to tackle the truss maintenance problem, we realize it is important and helpful to discuss some theoretical findings first. This will underpin the design of the algorithms. Given a deletion or an insertion, we believe a reader is prone to ask the following questions:

\begin{enumerate}
 \item[$Q_1$] \emph{Which} trusses may be affected? -- A set of $k$-trusses with different $k$'s may be affected. What are the affected $k$'s?
 \item[$Q_2$] \emph{How much} effect will an update produce? -- It is natural to assume that one update will affect the truss numbers of the other edges by at most 1. Is this true?
 \item[$Q_3$] \emph{How far} will the effect spread? -- Will the effect be bounded within an area or can the effect spread far away?
\end{enumerate}

Before answering the questions raised above, we introduce the following observation. The correctness is obvious.  

\begin{observation}
After an insertion, the truss number of any edge will not decrease; after a deletion, the truss number of any edge will not increase.
\label{lemma:obvious}
\end{observation}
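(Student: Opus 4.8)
The plan is to derive both halves from a single monotonicity fact about witnessing trusses. Fix an edge $e$, and first suppose $G'$ is obtained from $G$ by inserting a new edge $e_0$. Let $k=\phi(e)$ computed in $G$, so that there is a maximal $k$-truss $T=(V_s,E_s)$ of $G$ with $e\in E_s$. The key point is that $E_s\subseteq E\subseteq E(G')$, so the very same pair $(V_s,E_s)$ is again a subgraph of $G'$. It is still connected and still non-trivial, since these are intrinsic properties of the graph $(V_s,E_s)$, which has not changed; and for each edge $f\in E_s$, the number of triangles of $(V_s,E_s)$ containing $f$ depends only on $E_s$, hence is unchanged and still at least $k-2$. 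Therefore $(V_s,E_s)$ is a $k$-truss of $G'$ containing $e$, and so $\phi(e)$ computed in $G'$ is at least $k$. This establishes that an insertion never decreases any truss number.

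The deletion case then follows by reading an insertion backwards. If $G'$ is obtained from $G$ by deleting an edge $e_0$, then $G$ is obtained from $G'$ by inserting $e_0$. Hence for every edge $e$ of $G'$ (that is, every $e\neq e_0$), the insertion statement just proved — applied with the roles of $G$ and $G'$ interchanged — shows that $\phi(e)$ computed in $G$ is at least $\phi(e)$ computed in $G'$; equivalently, deleting $e_0$ does not increase $\phi(e)$.

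There is no genuine obstacle here; the only care needed is the bookkeeping that the witnessing subgraph $(V_s,E_s)$ continues to meet all three requirements of the $k$-truss definition — connectivity, non-triviality, and the per-edge bound of $k-2$ triangles — after the update, and each of these is immediate once one observes that $(V_s,E_s)$ is literally the same object before and after. It is worth noting that this observation pins down only the \emph{direction} of change and says nothing about its magnitude, so it does not by itself address $Q_2$.
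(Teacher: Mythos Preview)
Your argument is correct: the witnessing $k$-truss $(V_s,E_s)$ is an intrinsic object whose truss property depends only on its own edge set, so it survives any insertion into the ambient graph, and the deletion case follows by reversal. The paper itself gives no proof at all---it simply states that ``the correctness is obvious''---so your write-up is a clean, explicit justification of exactly the monotonicity the authors left implicit.
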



\subsection{How much effect will an update produce?}

We start with the ``\emph{how much}'' question ($Q_2$) first, because the answer is easier to understand and will be used to answer the ``\emph{which}'' question ($Q_1$). The answer to $Q_2$ is given in Lemma~\ref{lemma:how_much_effect}.
 
\begin{lemma}
After one deletion or insertion, if an edge is affected, the truss number is affected by at most 1, i.e. increased by 1 or decreased by 1.
\label{lemma:how_much_effect}
\end{lemma}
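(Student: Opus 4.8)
The plan is to argue by induction on the ``distance'' that the effect has propagated, or equivalently, to set up a single global argument showing that after one edge update, a consistent truss-number assignment in which every edge's number changes by at most $1$ is in fact \emph{valid}, i.e. it certifies the correct truss numbers. Since the correct truss number is uniquely determined, this forces the change to be at most $1$. Concretely, consider an insertion of $e=(a,b)$ (the deletion case is symmetric, using Observation~\ref{lemma:obvious} to fix the direction of change). Let $\phi(\cdot)$ be the truss numbers before the update, and define a candidate assignment $\phi'$ on $G+e$ by $\phi'(e') = \phi(e') + 1$ for $e' = e$ and for every edge $e'$ that lies in a triangle with $e$ inside its own truss-witnessing subgraph, and $\phi'(e') = \phi(e')$ otherwise — more carefully, one propagates the ``$+1$'' only along edges whose local support within the relevant truss actually rises to the threshold. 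I would then show two things: (i) $\phi'$ is \emph{feasible}, meaning for every $k$ the subgraph induced by $\{e' : \phi'(e') \ge k\}$ has every edge in at least $k-2$ triangles of that subgraph; and (ii) $\phi'$ is \emph{tight}, i.e. no edge could be assigned a strictly larger number, which is where Lemma~\ref{lemm:support} enters: inserting one edge raises $sup(e',\cdot)$ by at most $1$ for any $e'$, so the support-based upper bound on $\phi$ moves up by at most $1$.

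The key steps, in order, are: first, establish the easy upper direction — by Lemma~\ref{lemm:support}, $\phi(e') \le sup(e',G)+2$, and one insertion increases $sup(e',G)$ by at most $1$ (it only affects edges in $E_{S_{a,b}\leftrightarrow\{a,b\}}$, each gaining one triangle through $e$), so no truss number can rise by more than $1$ for \emph{those} edges; for all other edges the global support is unchanged, so a priori their number could still rise, and this is the case that needs the structural argument. Second, argue that if some edge not incident to a new triangle had its truss number jump by $\ge 2$, then by the definition of $k$-truss there is a chain of edges, each needing the presence of the next in a common triangle, leading back to a support increase of $\ge 2$ somewhere — contradiction. The cleanest way to make this rigorous is the ``feasibility + tightness'' packaging above: suppose for contradiction that the true new number of some edge is $\phi(e')+2$ or more; take such an edge $f$ with the largest old truss number $\phi(f)$ among all edges whose number increased by $\ge 2$; then in the new $(\phi(f)+2)$-truss $T$ containing $f$, every edge of $T$ has local support $\ge \phi(f)$, but I will show every edge of $T$ already had new truss number $\ge \phi(f)+1$ (using the maximality of $\phi(f)$ to rule out a bigger jump among edges of $T$ with larger old number, and the $\le 1$ bound just proved for edges touching a new triangle) — hence removing $e$ from $T$ leaves a subgraph in which every edge has local support $\ge \phi(f)-1 = (\phi(f)+1)-2$, so $T - e$ is a $(\phi(f)+1)$-truss in $G$ containing $f$, contradicting $\phi(f) < \phi(f)+1$.

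The main obstacle is the bookkeeping in that contradiction step: one must be careful that deleting the single edge $e$ from the hypothetical large new truss $T$ really only decreases each surviving edge's local support by at most $1$ (true, since $e$ is one edge and sits in at most one triangle with any fixed edge) and that the resulting graph is still connected and non-trivial so that it qualifies as a truss under Definition~1 — connectivity may fail if $e$ was a bridge of $T$, but an edge of a $k$-truss with $k \ge 3$ is never a bridge (it lies in a triangle), and for $k=2$ the claim is trivial, so this is handled. A secondary subtlety is choosing the right induction variable: ordering the ``bad'' edges by old truss number $\phi(f)$ and picking the maximum is what makes the inductive hypothesis available for all edges of $T$ that have strictly larger old numbers, while the already-proven base facts (Lemma~\ref{lemm:support} bound, Observation~\ref{lemma:obvious}) cover the rest. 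Once this is set up, the deletion case follows by the same argument run in reverse, swapping the roles of $G$ and $G+e$ and using Observation~\ref{lemma:obvious} to know the change is a decrease.
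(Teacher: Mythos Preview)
Your core contradiction step is correct and is, in fact, exactly the paper's argument --- but you have run it in the harder direction and wrapped it in machinery that is either unnecessary or wrong.

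First, the flawed step. Your ``easy upper direction'' via Lemma~\ref{lemm:support} does not work: the inequality $\phi(e')\le sup(e',G)+2$ is only an upper bound, and an upper bound moving by $1$ says nothing about how much the actual value moves. An edge can have $sup(e',G)=10$ and $\phi(e')=3$; after an insertion the bound becomes $13$, which does not prevent $\phi(e')$ from jumping to $5$. So this paragraph proves nothing, and consequently the ``$\le 1$ bound just proved for edges touching a new triangle'' that you later invoke as a base case is not available.

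Second, the unnecessary parts. Choosing $f$ to maximise the old truss number, and the sentence about ``every edge of $T$ already had new truss number $\ge \phi(f)+1$'', are red herrings: every edge of a $(\phi(f)+2)$-truss trivially has new truss number $\ge\phi(f)+2$, and your final two lines stand on their own. If $f$ jumps from $k$ to $\ge k+2$, take the new $(k+2)$-truss $T$; removing the single edge $e$ drops each surviving edge's support in $T$ by at most $1$, so $T-e$ is a $(k+1)$-truss in the old graph containing $f$, contradicting $\phi(f)=k$. No maximality, no base case.

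Now compare with the paper. The paper proves \emph{deletion} first, directly: for any $e'$ with $\phi(e')\le\phi(e)$, deleting $e$ removes at most one triangle from $T_{\phi(e')}$, so $T_{\phi(e')}-e$ is a $(\phi(e')-1)$-truss and $e'$ keeps truss number at least $\phi(e')-1$. Insertion then follows in one line by contradiction (an insertion causing a jump of $\ge 2$ would, upon deleting that edge back, cause a drop of $\ge 2$). Your argument is literally this deletion proof, discovered inside an attempted direct proof of insertion: your ``remove $e$ from $T$'' step \emph{is} the deletion case. Starting from deletion makes the whole thing three lines and avoids the detours.
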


\begin{proof}
We prove the deletion first and then the insertion.

Let $e$ be an edge to be deleted and its truss number be $\phi(e)$, firstly any edge $e'$ with truss number $\phi(e')$ more than $\phi(e)$ is not affected, because $e$ is not in the $\phi(e')$-truss since $\phi(e')>\phi(e)$, therefore deleting $e$ will not affect any edge in the $\phi(e')$-truss, i.e. $e'$ is still in the $\phi(e')$-truss. Secondly, this time, let $e'$ be any edge in the $\phi(e')$-truss $T_{\phi(e')}$ with $\phi(e') \leq \phi(e)$, after deleting $e$, the global and local support of $e'$ will reduce by at most 1 respectively (both may remain unchanged), because $e$ and $e'$ only participate together in at most one triangle. As a result, after deleting $e$, $e'$ is guaranteed to be in the $(\phi(e')-1)$-truss. In both cases, the effect is not more than 1.

The insertion follows the deletion, and can be proved by contradiction. Suppose inserting $e$ increases the truss number of an edge $e'$ by more than 1, then for the updated graph, deleting $e$ will cause the truss number of $e'$ to decrease by more than 1. This contradicts the first (deletion) part of the proof.  
\end{proof}

\subsection{Which trusses may be affected?}

The answer to this ``\emph{which}'' question is given in Theorem~\ref{theo:which_affected_delete} (for deletion) and Theorem~\ref{theo:which_affected_insert} (for insertion). 

\begin{theorem}
For deletion, let $e=(a,b)$ be the to-be-deleted edge, (a) if $S_{a,b} = \emptyset$ or $k_{min}(e) > \phi(e)$, no other edges are affected; (b) if $S_{a,b} \neq \emptyset$ and $k_{min}(e) \leq \phi(e)$, the possibly affected trusses must have truss numbers within the range $[k_{min}(e), \phi(e)]$.
\label{theo:which_affected_delete}
\end{theorem}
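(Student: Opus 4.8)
The plan is to establish Theorem~\ref{theo:which_affected_delete} by combining Observation~\ref{lemma:obvious}, Lemma~\ref{lemma:how_much_effect}, and a careful analysis of which edges can lose support when $e=(a,b)$ is removed. The key structural fact is that deleting $e$ only directly removes triangles of the form $(a,b,c)$ for $c\in S_{a,b}$; hence the only edges whose \emph{local/global support} changes in the first step are $e$ itself and the edges in $E_{S_{a,b}\leftrightarrow\{a,b\}}$. Every other edge $e''$ retains all its triangles for the moment, and can only be affected later, indirectly, if one of its supporting triangles uses an edge whose truss number has already dropped. So the argument is really about bounding the truss numbers of the ``seed'' edges in $E_{S_{a,b}\leftrightarrow\{a,b\}}$ and then showing the effect cannot leak outside the interval $[k_{min}(e),\phi(e)]$.

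For part (a): if $S_{a,b}=\emptyset$, then $e$ is in no triangle at all, so removing it changes no edge's support; by Observation~\ref{lemma:obvious} no truss number can change, so no other edge is affected. If $S_{a,b}\neq\emptyset$ but $k_{min}(e)>\phi(e)$, I would argue as follows. Consider any edge $e'\in E_{S_{a,b}\leftrightarrow\{a,b\}}$; by hypothesis $\phi(e')\ge k_{min}(e)>\phi(e)$, so $e$ does \emph{not} lie in $T_{\phi(e')}$ (an edge lies only in trusses up to its own truss number). Therefore the subgraph $T_{\phi(e')}$ is untouched by the deletion of $e$, and $e'$ is still contained in a $\phi(e')$-truss afterwards, i.e.\ $e'$'s truss number is unchanged. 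Since \emph{every} seed edge is unaffected, and by the structural observation the only candidates for a first-round change are the seed edges and $e$, nothing propagates: by induction on the rounds of propagation, no edge other than $e$ changes. (One should note $e$ itself is being deleted, so it is not counted among ``other edges''.)

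For part (b): the upper bound $\phi(e)$ follows directly from the deletion part of the proof of Lemma~\ref{lemma:how_much_effect}: any edge $e'$ with $\phi(e')>\phi(e)$ lies in $T_{\phi(e')}$ which does not contain $e$, hence is unaffected; so affected edges have truss number $\le\phi(e)$. For the lower bound $k_{min}(e)$, I would use a ``firewall'' argument. Suppose for contradiction some edge $e^\star$ with $\phi(e^\star)<k_{min}(e)$ is affected, and take $e^\star$ to be one whose truss number drops in the earliest propagation round among all such low edges. In that round, $e^\star$ loses enough support because some triangle supporting it inside $T_{\phi(e^\star)}$ was destroyed — meaning one of that triangle's other two edges already had its truss number fall below $\phi(e^\star)$ in an earlier round (or is $e$ itself, or is a seed edge). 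But $e$ has $\phi(e)\ge k_{min}(e)>\phi(e^\star)$, every seed edge has truss number $\ge k_{min}(e)>\phi(e^\star)$, and by minimality of the round no other edge with truss number $<k_{min}(e)$ has dropped yet; and an edge with original truss number $\ge k_{min}(e)$ that drops by Lemma~\ref{lemma:how_much_effect} by at most $1$ still has truss number $\ge k_{min}(e)-1\ge\phi(e^\star)$, which is not $<\phi(e^\star)$ unless equality — here one has to be slightly careful and phrase the invariant as ``no edge ever drops to a truss number strictly below $k_{min}(e)-1$ except possibly $e$'' or, cleaner, induct on the claim ``every affected edge has truss number in $[k_{min}(e),\phi(e)]$'' directly. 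Contradiction, so no such $e^\star$ exists.

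The main obstacle is making the propagation/induction rigorous: one must fix a well-founded notion of ``round'' or ``stage'' of the cascade (e.g.\ process edges in decreasing order of their \emph{new} truss numbers, as truss decomposition naturally does) and verify that at the moment an edge's truss number is finalized below its old value, the triangle(s) it lost were destroyed only by edges already finalized at a value $\ge k_{min}(e)$. Getting the off-by-one right at the boundary $k_{min}(e)$ versus $k_{min}(e)-1$ is the delicate point, and the clean fix is to carry the full interval statement $[k_{min}(e),\phi(e)]$ as the induction hypothesis rather than proving the two bounds separately.
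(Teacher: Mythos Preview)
Your argument is correct, and for part (a) and the upper bound in part (b) it matches the paper's proof essentially verbatim. For the lower bound in part (b), however, you take a different and more laborious route than the paper.

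The paper does not set up rounds of propagation or track a cascade. Instead it argues in one stroke, using Lemma~\ref{lemma:how_much_effect} as a black box about the \emph{final} state: for any $e'$ with $\phi(e')<k_{min}(e)$, every seed edge in $E_{S_{a,b}\leftrightarrow\{a,b\}}$ has original truss number $\ge k_{min}(e)$ and hence, by Lemma~\ref{lemma:how_much_effect}, new truss number $\ge k_{min}(e)-1\ge\phi(e')$; so all seed edges remain in the $\phi(e')$-truss after the deletion. Since the only triangles destroyed by removing $e$ are those of the form $(a,b,c)$ with $c\in S_{a,b}$, and these involve only $e$ and seed edges, every non-seed edge keeps all its triangles inside $T_{\phi(e')}$. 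Hence $T_{\phi(e')}\setminus\{e\}$ is still a $\phi(e')$-truss, and $e'$ retains its truss number.

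The advantage of the paper's approach is that Lemma~\ref{lemma:how_much_effect} already absorbs the entire cascade into a single global statement about final truss numbers, so there is no need to define rounds, no induction to set up, and no off-by-one to manage at the boundary---your ``delicate point'' simply evaporates. (Incidentally, your worry there is unfounded anyway: an edge that drops from $\ge k_{min}(e)$ to $\ge k_{min}(e)-1\ge\phi(e^\star)$ is still $\ge\phi(e^\star)$, which is exactly what membership in the $\phi(e^\star)$-truss requires; equality is not a problem.) Your propagation argument is not wrong, but it re-derives by hand what Lemma~\ref{lemma:how_much_effect} already hands you.
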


\begin{proof} For case (a), if $S_{a,b} = \emptyset$, it is obvious that no other edges are affected, because no triangles contain $(a,b)$; if $k_{min}(e) > \phi(e)$, no other edges are affected, because any edge $e'$ with truss number $\phi(e')$ more than $\phi(e)$ is not affected. The reason is that $e=(a,b)$ is not in the $\phi(e')$-truss containing $e'$ since $\phi(e')> \phi(e)$, therefore deleting $e$ will not affect $e'$.

For case (b), $S_{a,b} \neq \emptyset$ and $k_{min}(e) \leq \phi(e)$, (1) firstly, again, as proved in case (a), any edge $e'$ with truss number $\phi(e')$ more than $\phi(e)$ is not affected. (2) next, we prove that any edge with truss number less than $k_{min}(e)$ will not be affected. Let $e'$ be an edge with truss number $\phi(e') < k_{min}(e)$, then according to Lemma~\ref{lemma:how_much_effect}, after deleting $e$, all edges in $E_{S_{a,b}\leftrightarrow \{a,b\}}$ remain in the $\phi(e')$-truss, because the truss number of an edge can be affected by at most 1 and the minimum truss number directly affected by deleting $e$ is $k_{min}(e)$. As a result, the updated $\phi(e')$-truss is the same as the old $\phi(e')$-truss but without the deleted edge $e$, and thus the edge $e'$ remains in the $\phi(e')$-truss. As a result, combining case (1) and case (2), the possibly affected trusses are in the range of $[k_{min}(e), \phi(e)]$.
\end{proof}

Moreover, we show that the range $[k_{min}(e), \phi(e)]$ is tight with an example. Let $K_n$ denote a complete graph with $n$ nodes, suppose a graph is joined by a set of complete graphs $K_{k}, K_{k-1},...,K_{t}$ $(t \geq 3)$ on an edge $e$, i.e. $e$ is the only shared edge between $K_{k}, K_{k-1},...,K_{t}$, in such a graph, all edges in the subgraph $K_k$ including $e$ have the truss number $k = \phi(e)$, all edges in the subgraph $K_{k-1}$ except $e$ have the truss number $k-1$, ..., all edges in the subgraph $K_t$ except $e$ have the truss number $t$. Apparently, in this example, $k_{min}(e)=t$. If we delete the edge $e$, we can find that the truss numbers of all other edges will decrease by 1. These edges originally have the truss numbers ranging from $t=k_{min}(e)$ to $k=\phi(e)$, therefore the range $[k_{min}(e), \phi(e)]$ is tight. Fig.~\ref{fig:tightness} is one such example. A $K_3$, a $K_4$ and a $K_5$ are joined at edge $(a,b)$. If we delete $(a,b)$, all the other edges are affected and they originally have the truss range $[3,5]$. Here, for the edge $(a,b)$, $k_{min}((a,b))=3$ and $\phi((a,b))=5$. The affected range $[3,5]$ is tight in this example.

\begin{figure}[t]
    \centering
    {\includegraphics[width=60mm]{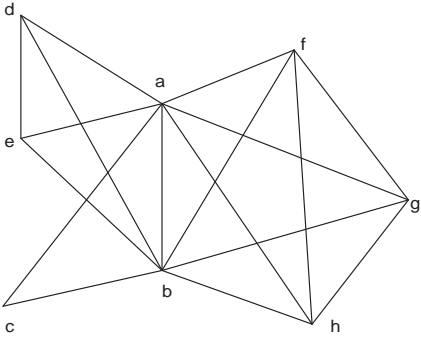}}
    \caption{A $K_3$, a $K_4$ and a $K_5$ are joined at $(a,b)$}
    \label{fig:tightness}
\end{figure}

Before presenting Theorem~\ref{theo:which_affected_insert} for the insertion, we introduce Lemma~\ref{lemma:together} and Lemma~\ref{lemma:truss_bound} first to support the proof of Theorem~\ref{theo:which_affected_insert}. 


\begin{lemma}
After inserting an edge $e$, if the truss number of an edge $e'$ increases from $\phi(e')$ to $\phi(e')+1$, then the inserted edge $e$ must be in the $(\phi(e')+1)$-truss containing $e'$ in the updated graph.
\label{lemma:together}
\end{lemma}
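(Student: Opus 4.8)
The plan is to argue by contradiction using the definition of $k$-truss together with Lemma~\ref{lemma:how_much_effect}. Suppose the truss number of $e'$ increases from $\phi(e')$ to $\phi(e')+1$ after inserting $e$, but the inserted edge $e$ is \emph{not} in the $(\phi(e')+1)$-truss $T'$ containing $e'$ in the updated graph. I would then consider the subgraph $T' \setminus \{e\}$, i.e. the subgraph $T'$ with the edge $e$ removed. Since $e \notin T'$ by assumption, we have $T' \setminus \{e\} = T'$ as an edge set, so $T' \setminus \{e\}$ is a subgraph that is entirely present in the \emph{original} graph (before the insertion of $e$).

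The key observation is that every edge of $T' \setminus \{e\}$ still has local support at least $(\phi(e')+1)-2 = \phi(e')-1$ within $T' \setminus \{e\}$. This is because removing $e$ from $T'$ can only destroy triangles that use $e$; but $e \notin T'$, so no triangle of $T'$ uses $e$, and therefore the local supports of all edges of $T'$ are unchanged by deleting $e$ (there is nothing to delete). Hence $T' \setminus \{e\} = T'$ is already a valid $(\phi(e')+1)$-truss present in the original graph and it contains $e'$ (connectivity and non-triviality are inherited since the subgraph is literally unchanged). This shows $e'$ already had truss number at least $\phi(e')+1$ in the original graph, contradicting the assumption that its original truss number was $\phi(e')$.

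The only subtle point — and the step I expect to require the most care — is making precise that ``$e$ not in $T'$'' genuinely means $T'$ is unaffected by the insertion, as opposed to $T'$ merely not containing the literal edge $e$ while still relying on $e$ through some triangle. But a triangle of $T'$ is a set of three edges all lying in $T'$; if $e \notin T'$ then no triangle of $T'$ can contain $e$, so the local-support count of every edge of $T'$ is identical whether or not $e$ is present in the ambient graph. Once this is nailed down, the contradiction is immediate. An alternative, essentially equivalent, framing is to invoke Observation~\ref{lemma:obvious} and the deletion direction: if $e$ were absent from the updated $(\phi(e')+1)$-truss of $e'$, then deleting $e$ from the updated graph would leave that truss intact, so $e'$ would retain truss number $\phi(e')+1$ after the deletion, i.e. in the original graph — again contradicting that $\phi(e')$ was its original truss number.
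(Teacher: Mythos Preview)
Your proof is correct and follows essentially the same contradiction argument as the paper: if $e$ does not lie in the $(\phi(e')+1)$-truss $T'$ containing $e'$ in the updated graph, then $T'$ is already present in the original graph (equivalently, deleting $e$ leaves $T'$ intact), forcing $e'$ to have had truss number $\phi(e')+1$ before the insertion. The ``alternative framing'' you give at the end is in fact exactly how the paper phrases it.
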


\begin{proof}
By contradiction, if the inserted edge $e$ is not in the $(\phi(e')+1)$-truss containing $e'$ in the updated graph, then deleting $e$ will not reduce the truss number of $e'$. This contradicts with the assumption that after inserting $e$, the truss number of $e'$ increases from $\phi(e')$ to $\phi(e')+1$.
\end{proof}


\begin{lemma}
For any edge $e=(a,b)$ in a graph, we have the following: 
\[
 \left\{ {\begin{array}{*{20}c}
   {\phi(e) =2 } & {(E_{S_{a,b}\leftrightarrow \{a,b\}} = \emptyset)}  \\
   {\phi(e) \leq max\{\phi(e'): e' \in E_{S_{a,b}\leftrightarrow \{a,b\}}  \} } & {(E_{S_{a,b}\leftrightarrow \{a,b\}} \neq \emptyset)}  \\
\end{array}} \right.
\]
\label{lemma:truss_bound}
\end{lemma}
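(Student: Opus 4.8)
The plan is to prove the two cases separately, with the first being essentially immediate and the second being the real content.

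For the case $E_{S_{a,b}\leftrightarrow\{a,b\}}=\emptyset$: this set is empty exactly when $S_{a,b}=\emptyset$, i.e. $a$ and $b$ share no common neighbor, which means the edge $e=(a,b)$ lies in no triangle at all. By Lemma~\ref{lemm:support}, $\phi(e)\le sup(e,G)+2 = 0+2 = 2$. Since every edge trivially sits in a (non-trivial connected) $2$-truss — the $2$-truss condition requires each edge to be in at least $0$ triangles, which is vacuous — we get $\phi(e)\ge 2$, hence $\phi(e)=2$.

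For the case $E_{S_{a,b}\leftrightarrow\{a,b\}}\neq\emptyset$: write $k=\phi(e)$ and let $T_k=T_{\phi(e)}$ be the maximal $k$-truss containing $e$. The goal is to exhibit, for every triangle of $T_k$ through $e$, an edge of $E_{S_{a,b}\leftrightarrow\{a,b\}}$ lying in $T_k$, and then to argue that one such edge must have truss number at least $k$. Here is the mechanism I would use: since $e\in T_k$, we have $sup(e,T_k)\ge k-2$, so $a$ and $b$ have at least $k-2$ common neighbors inside $T_k$; call this set $S'\subseteq S_{a,b}$, with $|S'|\ge k-2\ge 1$ (the last inequality because $k\ge 3$ whenever $e$ lies in any triangle, which it does since the edge set is nonempty). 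For each $c\in S'$, both $(a,c)$ and $(b,c)$ are edges of $T_k$ and belong to $E_{S_{a,b}\leftrightarrow\{a,b\}}$. Because $T_k$ is itself a $k$-truss and these edges lie in it, each such edge has truss number at least $k$: indeed $\phi((a,c))\ge k$ since $(a,c)$ is contained in the $k$-truss $T_k$. Therefore $\max\{\phi(e'):e'\in E_{S_{a,b}\leftrightarrow\{a,b\}}\}\ge k=\phi(e)$, which is the claimed inequality.

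The main obstacle — really the only subtlety — is the edge case $k=\phi(e)=2$ within the second branch: then $S'$ could be empty even though $S_{a,b}\neq\emptyset$, so the argument above about finding an edge of truss number $\ge k$ inside $T_k$ does not directly apply. But this is harmless: if $\phi(e)=2$ then the inequality $\phi(e)\le\max\{\phi(e'):\ldots\}$ holds automatically because every edge has truss number at least $2$, so the right-hand side is at least $2$. Thus I would split the second branch into $\phi(e)=2$ (trivial, by the universal lower bound $\phi(e')\ge 2$) and $\phi(e)\ge 3$ (handled by the $S'$-argument above, where $|S'|\ge k-2\ge 1$ guarantees a witnessing edge). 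No heavy computation is needed anywhere; the whole proof rests on Lemma~\ref{lemm:support}, the definition of a $k$-truss, and the observation that any edge lying inside a $k$-truss has truss number at least $k$.
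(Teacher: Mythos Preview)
Your proof is correct and rests on the same mechanism as the paper's: any triangle supporting $e$ inside $T_{\phi(e)}$ contributes edges of $E_{S_{a,b}\leftrightarrow\{a,b\}}$ that lie in $T_{\phi(e)}$ and therefore have truss number at least $\phi(e)$. The paper packages this as a contradiction (assume every edge in $E_{S_{a,b}\leftrightarrow\{a,b\}}$ has truss number below $\phi(e)$, conclude $e$ has no supporting triangles in $T_{\phi(e)}$, forcing $\phi(e)=2$), while you argue directly and treat the $\phi(e)=2$ boundary case more carefully, but the core idea is identical.
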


\begin{proof} If $E_{S_{a,b}\leftrightarrow \{a,b\}} = \emptyset$ (i.e. $S_{a,b}=n(a) \cap n(b)=\emptyset$), the truss number of $(a,b)$ is 2. The lemma holds naturally. If $E_{S_{a,b}\leftrightarrow \{a,b\}} \neq \emptyset$, assume $\phi(e) > max\{\phi(e'): e' \in E_{S_{a,b}\leftrightarrow \{a,b\}}  \} $, then we have that edge $(a,b)$ is in a $\phi(e)$-truss and all the edges in $E_{S_{a,b}\leftrightarrow \{a,b\}}$ are not in the $\phi(e)$-truss. Therefore, edge $(a,b)$ does not have supporting triangles in the $\phi(e)$-truss, so $\phi(e)$ must be 2. This contradicts with the assumption $\phi(e) > max\{\phi(e'): e' \in E_{S_{a,b}\leftrightarrow \{a,b\}}  \} $ in which it is apparent that $max\{\phi(e'): e' \in E_{S_{a,b}\leftrightarrow \{a,b\}}  \} \geq 3$ if $E_{S_{a,b}\leftrightarrow \{a,b\}} \neq \emptyset$, so $\phi(e) \leq max\{\phi(e'): e' \in E_{S_{a,b}\leftrightarrow \{a,b\}}  \} $ when $E_{S_{a,b}\leftrightarrow \{a,b\}} \neq \emptyset$. 
\end{proof}

Lemma~\ref{lemma:truss_bound} says the maximum truss number of edges in $E_{S_{a,b}\leftrightarrow \{a,b\}}$ is an upper bound of the truss number of $(a,b)$. Similarly, readers may wonder ``do we have the minimum truss number of the edges in $E_{S_{a,b}\leftrightarrow \{a,b\}}$ as a lower bound of the truss number of $(a,b)$ when $E_{S_{a,b}\leftrightarrow \{a,b\}} \neq \emptyset$?'', i.e. ``do we have $\phi(e) \geq min\{\phi(e'): e' \in E_{S_{a,b}\leftrightarrow \{a,b\}}  \} $ when $E_{S_{a,b}\leftrightarrow \{a,b\}} \neq \emptyset$?'' The answer is no. We illustrate this through an example in Fig.~\ref{fig:counter_example}, where $\phi((a,b))=3$, $S_{a,b}=n(a)\cap n(b)=\{c\}$, $E_{S_{a,b}\leftrightarrow \{a,b\}} = \{(a,c),(b,c)\}$, $min\{\phi(e'): e' \in E_{S_{a,b}\leftrightarrow \{a,b\}}  \}$ = $min\{\phi((a,c))=4,\phi((b,c))=4 \} = 4$. In this example, $\phi((a,b)) < min\{\phi(e'): e' \in E_{S_{a,b}\leftrightarrow \{a,b\}}  \}$.

\begin{figure}[t]
    \centering
    {\includegraphics[width=60mm]{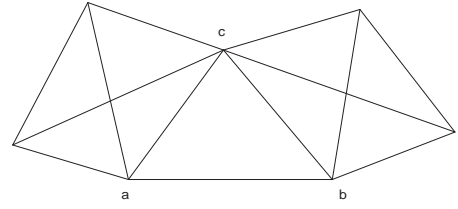}}
    \caption{An example showing $\phi((a,b)) \geq min\{\phi(e'): e' \in E_{S_{a,b}\leftrightarrow \{a,b\}}  \} $ does not hold. }
    \label{fig:counter_example}
\end{figure}


Now we are ready to state the following theorem. 

\begin{theorem}
For insertion, let $e=(a,b)$ be the newly inserted edge, (a) if $S_{a,b} = \emptyset$ or $k_{min}(e) > |S_{a,b}|+1$, no other edges are affected; (b) if $S_{a,b} \neq \emptyset$ and $k_{min}(e) \leq |S_{a,b}|+1$, the possibly affected trusses must have truss numbers within the range $[k_{min}(e), min(|S_{a,b}|+1, k_{max}(e))]$.
\label{theo:which_affected_insert}
\end{theorem}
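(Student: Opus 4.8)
\textbf{Proof proposal for Theorem~\ref{theo:which_affected_insert}.}

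The plan is to mirror the structure of the proof of Theorem~\ref{theo:which_affected_delete}, using the insertion-specific lemmas (Lemma~\ref{lemma:together} and Lemma~\ref{lemma:truss_bound}) in place of the support-reduction argument. For case (a), the subcase $S_{a,b}=\emptyset$ is immediate: the new edge $e=(a,b)$ participates in no triangle, so it belongs to no $k$-truss for $k\geq 3$, hence inserting it cannot raise any other edge's truss number (by Lemma~\ref{lemma:together}, any increase would force $e$ into the witnessing truss). For the subcase $k_{min}(e) > |S_{a,b}|+1$, I would argue as follows: suppose some edge $e'$ has its truss number raised from $\phi(e')$ to $\phi(e')+1$. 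By Lemma~\ref{lemma:together}, $e$ lies in the $(\phi(e')+1)$-truss $T$ containing $e'$. But in $T$, the only triangles through $e$ use common neighbors of $a,b$ that lie in $T$, so the local support of $e$ in $T$ is at most $|S_{a,b}|$; by the truss condition applied to $e$ in $T$, $\phi(e')+1 \le |S_{a,b}|+2$, i.e. $\phi(e') \le |S_{a,b}|+1$. Simultaneously, in the updated graph at least two edges of $E_{S_{a,b}\leftrightarrow\{a,b\}}$ (namely $(a,c)$ and $(b,c)$ for some $c$ in $S_{a,b}\cap T$) must also lie in $T$, so their updated truss numbers are $\ge \phi(e')+1$. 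Since by Observation~\ref{lemma:obvious} insertion never decreases truss numbers, their \emph{pre-insertion} truss numbers could in principle be smaller — so I must be careful here. The clean way is: apply Lemma~\ref{lemma:how_much_effect}, so the pre-insertion truss number of every edge in $E_{S_{a,b}\leftrightarrow\{a,b\}}$ is at least (updated value) $-1 \ge \phi(e')$; hence $k_{min}(e)\ge \phi(e')$ would contradict $k_{min}(e) > |S_{a,b}|+1 \ge \phi(e')+1 > \phi(e')$... I need to reconcile the direction; the actual point is that if $k_{min}(e) > |S_{a,b}|+1$ then no edge can be pushed up to level $|S_{a,b}|+2$ or below while keeping $E_{S_{a,b}\leftrightarrow\{a,b\}}$-edges out, and edges above $|S_{a,b}|+1$ can't be affected at all by the support bound. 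This reconciliation is the delicate bookkeeping step.

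For case (b), I would establish the two endpoints of the range separately. \textbf{Upper endpoint:} no edge with truss number exceeding $\min(|S_{a,b}|+1, k_{max}(e))$ can be affected. Two sub-arguments: (i) if $\phi(e') > |S_{a,b}|+1$, then an increase to $\phi(e')+1$ would by Lemma~\ref{lemma:together} put $e$ into a $(\phi(e')+1)$-truss where $e$'s local support is $\le |S_{a,b}| < \phi(e')+1-2$, contradicting the truss condition. (ii) if $\phi(e') > k_{max}(e)$, consider the updated $(\phi(e')+1)$-truss $T$ containing $e$; by Lemma~\ref{lemma:truss_bound} applied to $e$ inside $T$, the edges of $E_{S_{a,b}\leftrightarrow\{a,b\}}$ lying in $T$ must have updated truss number $\ge \phi(e')+1 > k_{max}(e)+1$, so their pre-insertion truss number was $\ge \phi(e') > k_{max}(e)$ by Lemma~\ref{lemma:how_much_effect}, contradicting the definition of $k_{max}(e)$. \textbf{Lower endpoint:} no edge with truss number below $k_{min}(e)$ is affected. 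Here I reuse the Theorem~\ref{theo:which_affected_delete} trick in reverse: if $\phi(e') < k_{min}(e)$, then by Lemma~\ref{lemma:how_much_effect} every edge in $E_{S_{a,b}\leftrightarrow\{a,b\}}$ still has truss number $\ge k_{min}(e)-1 \ge \phi(e')$ after insertion, so none of them enters a truss of level $\le \phi(e')+1$ newly in a way that would help $e'$; more directly, the $\phi(e')$-truss structure in the region of $e$ is unchanged except for possibly adding $e$ itself, and $e$ alone (with all its triangle-edges at level $\ge \phi(e') \ge$ ... ) cannot lift $e'$. The cleanest phrasing: the updated $(\phi(e')+1)$-truss, if it newly contained $e'$, would have to contain $e$ and hence two $E_{S_{a,b}\leftrightarrow\{a,b\}}$ edges at updated level $\ge \phi(e')+1$; but those edges had pre-insertion level $\ge k_{min}(e) > \phi(e')$, so they were already at level $> \phi(e')$ before, meaning $e'$'s neighborhood was already ``truss-rich'' and $e$ is irrelevant to it — formalized via Lemma~\ref{lemma:how_much_effect} this yields no contradiction-free way for $\phi(e')$ to rise.

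I expect the main obstacle to be the lower-endpoint argument and, relatedly, getting the case-(a) second subcase airtight: unlike deletion (where global/local support monotonically drops by at most one and the argument is symmetric), for insertion the interaction between the newly created triangles and the \emph{pre-existing} truss numbers of $E_{S_{a,b}\leftrightarrow\{a,b\}}$ edges is subtle, because those edges may themselves get promoted, so one cannot simply quote their old truss numbers. The fix throughout is to invoke Lemma~\ref{lemma:how_much_effect} to bound the gap between old and new truss numbers by $1$, combined with Lemma~\ref{lemma:together} to force $e$ (and therefore a pair of $E_{S_{a,b}\leftrightarrow\{a,b\}}$ edges) into any witnessing truss, and Lemma~\ref{lemma:truss_bound} to cap $e$'s own truss number by $k_{max}$ within that truss. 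Once these three tools are chained correctly the range $[k_{min}(e),\min(|S_{a,b}|+1,k_{max}(e))]$ falls out; I would also remark (as the paper did for deletion) that the bound is tight, e.g. by inserting the shared edge into a bouquet of complete graphs $K_t,\dots,K_k$ so that it simultaneously promotes edges across the whole range.
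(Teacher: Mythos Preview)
Your overall architecture matches the paper's: upper bound via the support cap $|S_{a,b}|$ and via $k_{max}(e)$ through Lemmas~\ref{lemma:together}, \ref{lemma:how_much_effect}, \ref{lemma:truss_bound}; lower bound via $k_{min}(e)$. Your upper-endpoint arguments are fine. For (ii) you take a slightly different route than the paper (you locate an $E_{S_{a,b}\leftrightarrow\{a,b\}}$ edge inside the witnessing truss and push its \emph{pre}-insertion level above $k_{max}(e)$ using Lemma~\ref{lemma:how_much_effect}, contradicting the definition of $k_{max}$), whereas the paper instead bounds $\phi(e)$ itself via Lemma~\ref{lemma:truss_bound}; both work.

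The genuine gap is exactly where you flagged it: the lower endpoint (and hence the second subcase of (a)). Your attempted contradiction does not close. You correctly deduce, via Lemma~\ref{lemma:together}, that the witnessing $(\phi(e')+1)$-truss must contain $e$ and therefore at least one pair $(a,c),(b,c)$ from $E_{S_{a,b}\leftrightarrow\{a,b\}}$, and that these edges had pre-insertion truss number $\ge k_{min}(e)\ge \phi(e')+1$. But that is \emph{consistent}, not contradictory: it just says those edges were already in the old $(\phi(e')+1)$-truss. From here you write ``$e$ is irrelevant to it'' and ``no contradiction-free way for $\phi(e')$ to rise'', which is intuition, not proof.

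The paper turns this same observation into a \emph{direct} (non-contradiction) argument. Since $\phi(e')<k_{min}(e)$, \emph{every} edge of $E_{S_{a,b}\leftrightarrow\{a,b\}}$ already lies in the old $(\phi(e')+1)$-truss. But these are precisely the edges whose support can change when $e$ is inserted (they are the only edges forming a triangle with $e$). Hence inserting $e$ does not raise the support, within the $(\phi(e')+1)$-truss, of \emph{any} edge outside that truss; the updated $(\phi(e')+1)$-truss is the old one together with at most $e$ itself. In particular $e'$, which was outside before, stays outside, so $\phi(e')$ is unchanged. This is the missing step you need; once you have it, case~(a)'s second subcase follows immediately because the admissible range $[k_{min}(e),\min(|S_{a,b}|+1,k_{max}(e))]$ becomes empty.
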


\begin{proof} For case (a), if $S_{a,b} = \emptyset$, obviously no other edges are affected; if $k_{min}(e) > |S_{a,b}|+1$, no other edges are affected because any edge $e'$ with $\phi(e') > |S_{a,b}|+1$ is not affected. Assume the truss of $e'$ is affected and increased to $\phi(e')+1$, according to Lemma~\ref{lemma:together}, $(a,b)$ will be in the $(\phi(e')+1)-truss$ with the truss number larger than $|S_{a,b}|+2$. However, it is obvious that $|S_{a,b}|$ is the global support of $(a,b)$, so $|S_{a,b}|+2$ is an upper bound of $(a,b)$'s truss number. A contradiction appears.  

For case (b), $S_{a,b} \neq \emptyset$ and $k_{min}(e) \leq |S_{a,b}|+1$, if an edge $e'$ is affected, then we must have (1) $\phi(e') \leq |S_{a,b}|+1$, (2) $\phi(e') \leq k_{max}(e)$ and (3) $\phi(e') \geq k_{min}(e)$. We now prove each part respectively. (1) Firstly, $\phi(e') \leq |S_{a,b}|+1$ is proved in case (a). (2) Secondly, any edge $e'$ with $\phi(e') > k_{max}(e)$ is not affected. We prove by contradiction. Assume the truss number of $e'$ is increased to $\phi(e')+1$, according to Lemma~\ref{lemma:together}, the truss number of the inserted edge $e=(a,b)$, $\phi(e)$, is at least $\phi(e')+1$ in the updated graph, so we have $\phi(e) > k_{max}(e) + 1$. On the other hand, according to Lemma~\ref{lemma:how_much_effect}, in the updated graph, $max\{\phi(e') : e' \in E_{S_{a,b}\leftrightarrow \{a,b\}}  \} \leq k_{max}(e) + 1$. Furthermore, according to Lemma~\ref{lemma:truss_bound}, in the updated graph, $\phi(e) \leq max\{\phi(e') : e' \in E_{S_{a,b}\leftrightarrow \{a,b\}}  \} \leq k_{max}(e) + 1$. There is a contradiction. (3) Finally, any edge $e'$ with $\phi(e') < k_{min}(e)$ is not affected. Since $\phi(e') < k_{min}(e)$, all the edges in $E_{S_{a,b}\leftrightarrow \{a,b\}}$ are in a $(\phi(e')+1)$-truss. As a result, there is no new edges adding into the $(\phi(e')+1)$-truss except the newly inserted one, because all the directly affected edges $E_{S_{a,b}\leftrightarrow \{a,b\}}$ are all already in the $(\phi(e')+1)$-truss. The unchangingness of any $(\phi(e')+1)$-truss with $\phi(e')<k_{min}(e)$, together with Observation~\ref{lemma:obvious}, implies that current $\phi(e')$-truss edges remain as $\phi(e')$-truss edges. 
\end{proof}

Then, let us inspect the tightness of $[k_{min}(e), min\{|S_{a,b}|+1, k_{max}(e)\}]$. Similar to the deletion, suppose a graph is joined by a set of complete graphs $K_{k}, K_{k-1},...,K_{t} (t \geq 3)$ on an edge $e$, but now $e$ was deleted, and we want to insert $e$ back into the $e$-deleted graph. Then the affected trusses will have truss ranges from $t-1$ to $k-1$. $t-1$ and $k-1$ are the $k_{min}(e)$ and $k_{max}(e)$ of the to-be-inserted edge $e$ in the $e$-deleted graph. Sometimes, $k_{max}(e)$ may be larger than $|S_{a,b}|+1$. In such cases, $[k_{min}(e),|S_{a,b}|+1]$ is the affected truss range. See Fig.~\ref{fig:k_max_large} for an example, $(a,b)$ is the edge to be inserted, $k_{max}((a,b))$ is 4 on the edge $(b,c)$, $k_{min}((a,b))$ is 3 on the edges $(a,d)$, $(b,d)$ and $(a,c)$, $S_{a,b}=\{c,d\}$, so $k_{min}((a,b)) = |S_{a,b}|+1 = 3 < k_{max}((a,b))$. The affected truss range is $[k_{min}((a,b)), |S_{a,b}|+1]$, which is $[3,3]$. In this example, the affected edges are $(a,d)$, $(b,d)$, $(a,c)$ and $(c,d)$. All of them originally have truss number 3. In the next example, $k_{min}((a,b)) > |S_{a,b}|+1$, see Fig.~\ref{fig:k_min_large}. Again, $(a,b)$ is the to-be-inserted edge, $S_{a,b}=\{c\}$, $k_{max}((a,b))=4$ on the edge $(b,c)$, $k_{min}((a,b))=3$ on the edge $(a,c)$, $|S_{a,b}|+1 = 2 < k_{min}((a,b))$. No existing edges are affected when we insert $(a,b)$.

\begin{figure}[t]
    \centering
    {\includegraphics[width=60mm]{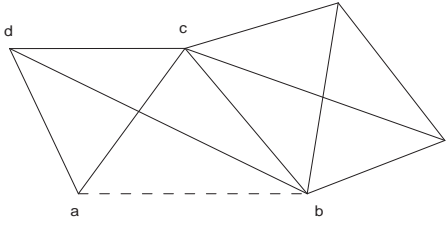}}
    \caption{$k_{max}((a,b))=4$, $k_{min}((a,b))=3$, $S_{a,b}=\{c,d\}$, $k_{min}((a,b)) \leq |S_{a,b}|+1<k_{max}((a,b))$, so the affected truss range is $[k_{min}((a,b)), |S_{a,b}|+1]$.}
    \label{fig:k_max_large}
\end{figure}

\begin{figure}[t]
    \centering
    {\includegraphics[width=60mm]{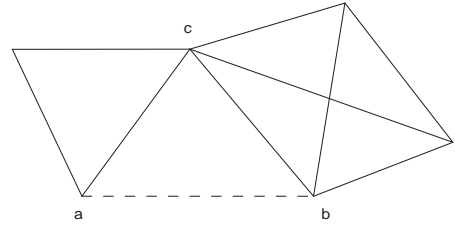}}
    \caption{$k_{max}((a,b))=4$, $k_{min}((a,b))=3$, $S_{a,b}=\{c\}$, $k_{min}((a,b)) > |S_{a,b}|+1$, so no other edges are affected.}
    \label{fig:k_min_large}
\end{figure}

Theorem~\ref{theo:which_affected_delete} and~\ref{theo:which_affected_insert} tell us that, for an update, we do not need to inspect all the edges in the graph, instead we only need to inspect a subset of edges whose truss numbers are within a certain range.

\subsection{How far will the effect spread?}
\label{sect:how_far}

From the above discussions, we find that the truss number of an edge is affected because the local support of the edge is changed. When we delete or insert an edge $(a,b)$, recall that $S=n(a) \cap n(b)$, the set of edges that are directly affected is $E_{S_{a,b}\leftrightarrow \{a,b\}}$, because $(a,b)$ forms triangles with the edges in $E_{S_{a,b}\leftrightarrow \{a,b\}}$. As a result, the global support of the edges in $E_{S_{a,b}\leftrightarrow \{a,b\}}$ will change, and this may lead to changes of their local support, eventually the truss numbers of these edges may change. Furthermore, the effect may spread to the neighbor edges of the edges in $E_{S_{a,b}\leftrightarrow \{a,b\}}$. In this section, we will discuss how far the effect will spread. For simplicity, we only discuss deletion in all the examples in this section. The effect of insertion is similar. (one can simply insert the deleted edge back and see the impact of an insertion in the following examples.)

Firstly, we show that the effect may spread indeed. It is natural to recognize that the truss numbers of edges in $E_{S_{a,b}\leftrightarrow \{a,b\}}$ may be affected, but could other edges be affected, such as the edges incident on node $a$ or $b$ but not in $E_{S_{a,b}\leftrightarrow \{a,b\}}$? The answer is yes. We give an example in Fig.~\ref{fig:how_far_one_step}. The graph is a 4-truss. Every edge is in two triangles. If we delete $(a,b)$, directly we will have $(a,c)$, $(a,d)$, $(b,c)$, $(b,d)$ having global support 1, and thus these four edges cannot be in a 4-truss. Further, we find that there is no 4-truss in the updated graph. The truss numbers of all the other edges are reduced from 4 to 3, such as $(a,e)$, $(b,f)$, though they do not belong to $E_{S_{a,b}\leftrightarrow \{a,b\}}$. 

\begin{figure}[t]
    \centering
    {\includegraphics[width=48mm]{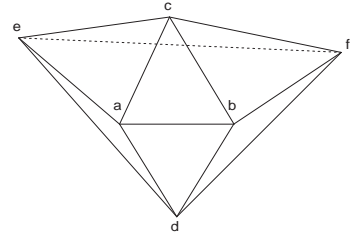}}
    \caption{The effect may spread}
    \label{fig:how_far_one_step}
\end{figure}

Next, we give another example to show that the effect can be spread a long way away (summarized as Lemma~\ref{lemm:unlimited}). In Fig.~\ref{fig:how_far_unlimited}, there is an unlimited graph piling by the unit shown in the bottom right corner. In such a graph, every edge is in two triangles and every edge has truss number 4. If we delete edge $(a,b)$, the first set of edges whose truss numbers are affected is $(a,p_2)$, $(b,p_2)$, $(a,q_3)$, $(b,q_3)$. Each of these four edges will have only one supporting triangle, so they will no longer be in the 4-truss. Then, the effect will spread towards four directions leading to the truss number of every edge decreased by 1. 

\begin{lemma}
There is no guarantee that the update effect will end after spreading a fixed number of steps.  
\label{lemm:unlimited}
\end{lemma}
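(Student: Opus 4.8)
The plan is to prove Lemma~\ref{lemm:unlimited} by exhibiting an explicit infinite (or arbitrarily large) family of graphs in which a single edge deletion forces truss-number changes at edges arbitrarily far from the deleted edge. The statement is essentially a negative/impossibility claim, so a single well-chosen construction suffices; the work is in verifying that the construction behaves as claimed. I would use exactly the ``piling'' construction already pictured in Fig.~\ref{fig:how_far_unlimited}: a chain (or grid) of triangle-pairs, glued along shared edges, engineered so that every edge of the whole graph has truss number exactly $4$ and sits in exactly two triangles, with the crucial feature that each triangle's ``survival'' as part of a $4$-truss depends on its neighbor triangles' edges also surviving.

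First I would set up the construction precisely for a parameter $N$: describe the vertex set, the edge set, and the list of triangular faces, making the gluing pattern unambiguous (the unit cell is the small gadget in the bottom-right of the figure, repeated $N$ times in each of the four outward directions from the seed edge $(a,b)$). Second, I would verify the ``before'' state: show that in this graph every edge is contained in exactly two triangles, hence has global support $2$, so by Lemma~\ref{lemm:support} every truss number is at most $4$; and then exhibit that the whole graph is itself a connected $4$-truss (each edge has local support $2$ within it), so every edge has truss number exactly $4$. Third, I would verify the ``after'' state: delete $(a,b)$; the four edges $(a,p_2),(b,p_2),(a,q_3),(b,q_3)$ now have global support $1$, so they drop out of any $4$-truss; I would then argue by induction along the chain that, once a triangle loses one of its bounding edges from the $4$-truss, the adjacent triangle in turn loses the support it needed, so its ``far'' edge also falls below support $2$ inside any candidate $4$-truss. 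Concluding, no $4$-truss survives anywhere, so \emph{every} edge's truss number drops to $3$ (one can also check $3$ is attained), and in particular edges at distance $\Theta(N)$ from $(a,b)$ are affected; since $N$ is arbitrary, no fixed bound on the spreading radius can exist, which is exactly the statement. A one-line remark would note the insertion case follows symmetrically by inserting $(a,b)$ back into the deleted graph.

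The main obstacle is the inductive step establishing that the ``collapse'' genuinely propagates all the way down the chain rather than halting somewhere. The subtlety is that Lemma~\ref{lemma:how_much_effect} only tells us each truss number changes by at most $1$; it does \emph{not} by itself tell us the change reaches far. To get propagation I need a structural argument specific to the gadget: I must show that in the unit cell, the far edge's \emph{only} way to be in a $4$-truss in the updated graph routes through a triangle whose near edge has already been expelled, so the far edge cannot collect the two supporting triangles a $4$-truss edge requires. Concretely this means checking, cell by cell, that there is no alternative set of triangles that could keep a local support of $2$ once the ``incoming'' edge is gone — i.e. the gadget has no redundant triangles. If the figure's gadget happens to have such redundancy I would instead argue globally: show the updated graph contains no $4$-truss at all (its unique candidate, the original graph minus $(a,b)$, already fails because $(a,p_2)$ has support $1$), and that this failure cannot be localized, so by connectivity the deficiency forces every edge down to $3$. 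Either route works; pinning down the gadget's exact edge/triangle incidences is the one place real care is needed.
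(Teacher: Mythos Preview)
Your proposal is correct and follows essentially the same approach as the paper: the paper's own argument for Lemma~\ref{lemm:unlimited} is exactly the Fig.~\ref{fig:how_far_unlimited} construction you describe, together with the informal observation that deleting $(a,b)$ drops $(a,p_2),(b,p_2),(a,q_3),(b,q_3)$ to support $1$ and the collapse then propagates outward in all four directions. Your write-up is in fact more careful than the paper's (which simply asserts the propagation without the inductive verification or the explicit check that no $4$-truss survives), but the underlying idea and construction are identical.
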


\begin{figure}[t]
    \centering
    {\includegraphics[width=63mm]{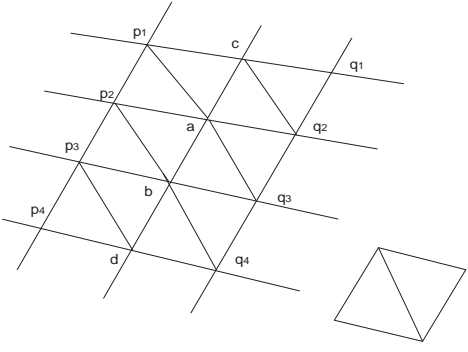}}
    \caption{The effect may spread very far}
    \label{fig:how_far_unlimited}
\end{figure}

\section{Algorithms}
\label{sect:algorithms}

Based on the properties discussed in Section~\ref{sect:properties}, we design maintenance algorithms for deleting an edge and inserting an edge.

\subsection{Deleting an edge}
\label{subsect:delete}

Inspired by the answers to the questions ``\emph{which trusses may be affected}'' and ``\emph{how far will the effect spread}'' in Section~\ref{sect:properties}, we can devise an outward inspection based truss maintenance approach for edge deletion. Firstly, we start from the to-be-deleted edge $(a,b)$ and check which neighbor edges will be directly affected if we delete $(a,b)$. These directly affected edges must be in the edge set $E_{S_{a,b}\leftrightarrow \{a,b\}}$, because other edges (the edges not in $E_{S_{a,b}\leftrightarrow \{a,b\}}$) do not participate in triangles together with $(a,b)$, and thus they are definitely not affected at the moment, although we may find that they are affected at a later time. After inspecting the set $E_{S_{a,b}\leftrightarrow \{a,b\}}$, we may find some edges in $E_{S_{a,b}\leftrightarrow \{a,b\}}$ are affected while the others are not. Then, we continue inspecting the neighbor edges of the affected edges in $E_{S_{a,b}\leftrightarrow \{a,b\}}$ and repeat the process until no newly affected edges are found. During the process, only the edges with truss numbers in the range $[k_{min}((a,b)), \phi((a,b))]$ need to be examined. This is guaranteed by Theorem~\ref{theo:which_affected_delete}. Algorithm~\ref{algo:out_delete} gives the maintenance steps.


\begin{algorithm}[t]
\caption{ An outwards inspection based truss maintenance algorithm for edge deletion}
\label{algo:out_delete}
{ 
\textbf{Input:} a graph $G=(V,E)$ with truss numbers associated with its edges, a to-be-deleted edge $(a,b)$; \\
\textbf{Output:} $G=(V,E$-$\{(a,b)\})$ with updated truss numbers associated with its edges;  
  \begin{algorithmic}[1]
  					\STATE delete the edge $(a,b)$;
            \STATE put the edges in $E_{S_{a,b}\leftrightarrow \{a,b\}}$ with truss numbers in $[k_{min}((a,b)), \phi((a,b))]$ (Theorem~\ref{theo:which_affected_delete}) into a queue;
                        \WHILE{the queue is not empty}
                        				\STATE take an edge $(v_1,v_2)$ from the queue;
                        				\IF{ $(v_1,v_2)$ is not marked $\wedge$ $localSupport((v_1,v_2), \phi((v_1,v_2))) < \phi((v_1,v_2))-2$ }\label{line:decrease}
                        							\STATE decrease the truss number of $(v_1, v_2)$;
                        							\STATE mark the edge $(v_1,v_2)$;
                        							\STATE put the edges in $E_{S_{v_1,v_2}\leftrightarrow \{v_1,v_2\}}$ with truss numbers in $[k_{min}((a,b)), \phi((a,b))]$ (Theorem~\ref{theo:which_affected_delete}) into the queue;
                        								
                        				\ENDIF
                        				
                        \ENDWHILE
                        \STATE start from the edge $(a,b)$, do a depth-first or breadth-first search to unmark all the marked edges; \\
                        \COMMENT {This is to guarantee the correctness of processing the next deletion or insertion.}
                        
             \RETURN $G$;
   \end{algorithmic}
}
{
\textbf{Step~\ref{line:decrease}:} $localSupport((v_1,v_2), k)$ \\ 
\textbf{Input:} an edge $(v_1,v_2)$ and a truss number $k$;\\
\textbf{Output:} the local support of the edge $(v_1,v_2)$ with respect to the truss number $k$;
  \begin{algorithmic}[1]
            \STATE $lSupport \leftarrow 0$;
            \FOR{ each node $v \in n(v_1)\cap n(v_2)$}\label{line:decreaseInter}
            		\IF{ $\phi((v,v_1)) \geq k \wedge \phi((v,v_2)) \geq k$ }
            				\STATE $lSupport \leftarrow lSupport + 1$;
            		\ENDIF
            \ENDFOR
            \RETURN $lSupport$;
   \end{algorithmic}
}

\end{algorithm}


We now explain Algorithm~\ref{algo:out_delete} in detail. Line 1 deletes the edge $(a,b)$. Line 2 puts the pruned edge set $E_{S_{a,b}\leftrightarrow \{a,b\}}$ into a queue, since this is the set of edges whose truss numbers may be directly affected by deleting $(a,b)$. While the queue is not empty (line 3), i.e. there are still some edges that may be affected directly or indirectly waiting to be inspected, we take an edge from the queue (line 4), and check whether the truss number of the edge should be decreased (line 5). The first condition ``$(v_1,v_2)$ is not marked'' means if the truss number of an edge has been decreased before (see line 7) we do not need to decrease the truss number again, because according to Lemma~\ref{lemma:how_much_effect} we only need to decrease the truss number once. The second condition of line 5 checks whether the edge $(v_1,v_2)$ has enough local support with respect to the truss number $\phi((v_1,v_2))$. If not, i.e. ``$localSupport((v_1,v_2), \phi((v_1,v_2))) < \phi((v_1,v_2))-2$'' is $true$, we decrease the truss number of $(v_1,v_2)$ (line 6), mark the edge $(v_1,v_2)$ (line 7) and put the pruned edges in $E_{S_{v_1,v_2}\leftrightarrow \{v_1,v_2\}}$ with truss numbers in $[k_{min}((a,b)), \phi((a,b))]$ 
into the queue (line 8). The truss numbers of the edges in the set $E_{S_{v_1,v_2}\leftrightarrow \{v_1,v_2\}}$ may be affected because the truss number of the edge $(v_1,v_2)$ is affected. On the contrary, if the condition ``$localSupport((v_1,v_2), \phi((v_1,v_2))) < \phi((v_1,v_2))-2$'' is $false$, no action is required. We simply go on to inspect the next edge in the queue, i.e. start a new loop in the while clause from line 4. Finally, after we empty the queue, the truss numbers of the edges in $G$ have been updated. Line 11 is a post-processing step that prepares for the next update operation.

We now explain the function $localSupport((v_1,v_2), k)$. In order to compute the local support of the edge $(v_1,v_2)$ with respect to the truss number $k$, we need to count how many triangles containing the edge $(v_1,v_2)$ also have the other two edges with truss numbers no less than $k$. This equals to count how many such node $v \in S_{v_1,v_2} = n(v_1)\cap n(v_2)$ having both $\phi((v,v_1))$ and $\phi((v,v_2))$ no less than $k$. 


In the following, we will prove Algorithm~\ref{algo:out_delete} is correct and analyze the algorithm complexity. 

\begin{lemma}
Algorithm~\ref{algo:out_delete} is correct. The complexity is $O(|E_l|)$, where $|E_l|$ is the number of edges whose local support are affected and $|E_l| << |E|$. 
\end{lemma}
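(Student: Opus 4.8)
The plan is to split the claim into two parts: correctness of the truss numbers produced by Algorithm~\ref{algo:out_delete}, and the $O(|E_l|)$ running-time bound. For correctness, I would first establish a loop invariant: at every point during the while-loop, every edge $e'$ whose truss number has \emph{not} been decremented still genuinely supports its stored value $\phi(e')$ in the updated graph, and every edge that still needs to be decremented (because its local support has dropped below $\phi(e')-2$) is either in the queue or reachable by propagation from an edge in the queue. The base case follows from Observation~\ref{lemma:obvious} together with the fact that deleting $(a,b)$ directly affects only the supports of edges in $E_{S_{a,b}\leftrightarrow\{a,b\}}$, which are exactly the edges placed in the queue in line~2 (restricted to the range $[k_{min}((a,b)),\phi((a,b))]$, which is justified by Theorem~\ref{theo:which_affected_delete}). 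For the inductive step, I would argue that when an edge $(v_1,v_2)$ is dequeued and the test in line~\ref{line:decrease} succeeds, decrementing its truss number is \emph{necessary} (its current local support, computed by $localSupport$ with respect to the old value $\phi((v_1,v_2))$, is too small, so no $\phi((v_1,v_2))$-truss can contain it), and it is \emph{sufficient} to decrement by exactly one (by Lemma~\ref{lemma:how_much_effect}); moreover, the only edges whose local support can be further disturbed by this single decrement are those in $E_{S_{v_1,v_2}\leftrightarrow\{v_1,v_2\}}$, which are precisely the edges re-enqueued in line~8, so no affected edge is ever missed. The ``marked'' flag prevents a second (incorrect) decrement of the same edge, again using Lemma~\ref{lemma:how_much_effect}. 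Termination then follows because an edge can be marked at most once, and only marked edges generate new enqueued edges; the final unmarking pass in line~11 restores a clean state without touching truss numbers.

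For the subtler direction of correctness, I would verify that when the test in line~\ref{line:decrease} \emph{fails} for a dequeued edge, no action is indeed correct: the function $localSupport((v_1,v_2),\phi((v_1,v_2)))$ counts exactly those common neighbors $v$ of $v_1,v_2$ for which both incident edges currently have truss number at least $\phi((v_1,v_2))$, i.e. the number of triangles that can survive in the $\phi((v_1,v_2))$-truss; if this is at least $\phi((v_1,v_2))-2$ the edge still qualifies, so its stored value is correct and needs no change, and consequently it cannot trigger further propagation. A small point to nail down here is that $localSupport$ uses the \emph{current} truss numbers of neighboring edges, some of which may themselves still be pending in the queue; I would argue by the loop invariant that any such pending decrement, once performed, will cause $(v_1,v_2)$ to be re-enqueued (because $(v_1,v_2)\in E_{S_{v_1',v_2'}\leftrightarrow\{v_1',v_2'\}}$ for the neighbor being decremented), so the edge gets a second chance and correctness is not compromised by processing order. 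This is the step I expect to be the main obstacle, since it requires the invariant to be stated carefully enough to cover re-examination of edges whose neighbors are decremented after they were first dequeued, and to rule out both under-counting and over-counting of decrements.

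For the complexity bound, let $E_l$ be the set of edges whose local support is affected by the deletion. Every edge that is ever enqueued lies in some set $E_{S_{v_1,v_2}\leftrightarrow\{v_1,v_2\}}$ where $(v_1,v_2)$ was decremented, hence lies in $E_l$ (its support with respect to its own truss level changed) or is adjacent to such a change; one can bound the total number of enqueue operations by $O(|E_l|)$ times a factor for the degree of the shared node, which the paper folds into the $O(|E_l|)$ notation (treating neighbourhood scans as constant-sized relative to the affected region, or more precisely charging each $localSupport$ call and each enqueue to a distinct affected edge). Each dequeued edge triggers one $localSupport$ computation, costing $O(|n(v_1)\cap n(v_2)|)$, and at most one re-enqueueing burst. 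Since only edges whose truss number genuinely drops (a subset of $E_l$) generate propagation, and each such edge does so once (guarded by the mark), the total work is proportional to $|E_l|$ up to the neighbourhood-scan factor; the final unmarking traversal in line~11 visits only marked edges, again $O(|E_l|)$. The observation $|E_l|\ll|E|$ is empirical and need not be proved. I would present the time analysis as the easier half, with the necessity/sufficiency and ``no missed edge'' parts of correctness carrying the real weight of the argument.
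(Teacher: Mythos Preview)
Your proposal is correct and follows essentially the same route as the paper, but with considerably more care on the correctness side. The paper's own argument is brief: it observes that all possibly-affected edges are eventually enqueued (lines~2 and~8), and that when $(v_1,v_2)$ is dequeued the triggering neighbour has already been decremented, so the computed local support is an \emph{upper bound} of the final support, making the decrement safe. Your loop-invariant formulation and, in particular, your explicit treatment of the ``re-examination'' case (an edge that passes the test now but will later be re-enqueued when a neighbour is decremented) articulate a point the paper leaves implicit.

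On complexity the two analyses coincide in spirit. The paper makes explicit what you allude to: it writes the cost as $O(n_q\cdot deg_{\max})$, bounds $n_q\le 2\,deg_{\max}\,|E_l|$ (each affected edge is enqueued at most $2|S_{a,b}|<2\,deg_{\max}$ times), and then \emph{treats $deg_{\max}$ as a constant} to obtain the stated $O(|E_l|)$. Your ``folding the degree factor into $O(|E_l|)$'' is exactly this assumption; be aware it is an assumption, not a derivation. One minor discrepancy: the paper actually charges line~11 at $O(|E|)$ (a full BFS/DFS) and claims it is dominated by the while-loop, which sits oddly with $|E_l|\ll|E|$; your $O(|E_l|)$ accounting for the unmarking pass is the more defensible reading, provided one tracks marked edges explicitly.
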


\begin{proof}
The first possibly affected set of edges are put into the queue in line 2. The following possibly affected edges are put into the queue in line 8. All possibly affected edges are put into the queue by Algorithm~\ref{algo:out_delete}. When an edge $(v_1,v_2)$ is taken from the queue to be inspected, the source affected edge which leads $(v_1,v_2)$ to be put into the queue has been updated (the truss number is decreased). This guarantees the current computed local support of $(v_1,v_2)$ is an upper bound of the local support in the final updated graph. Therefore, it is safe to decrease the truss number of $(v_1,v_2)$ if the current local support is already less than $\phi((v_1,v_2))-2$. It is also guaranteed that the truss number can be decreased only once.

Complexity of Algorithm~\ref{algo:out_delete}: Assume the graph $G=(V,E)$ is stored in an adjacency list, we analyze the neighbor set intersection $n(a) \cap n(b)$ first. By using a hash table, $n(a) \cap n(b)$ can be done in $O(|n(a)|+|n(b)|)$ time. This is accomplished by firstly putting the edges in $n(a)$ into the hash table, and then hash-check the edges in $n(b)$. (An inferior method using nested-loop checks requires $O(|n(a)||n(b)|)$ time.) Back to our discussion on Algorithm~\ref{algo:out_delete}, line 1, line 2 and line 11 have complexity $O(1)$, $O(|n(a)|+|n(b)|)$ and $O(|E|)$ respectively. These are dominated by the while clause, line 3 to line 10. In the while clause, the critical step is line 5 and line 8. For line 5, the function $localSupport()$ has complexity $O(|n(v_1)|+|n(v_2)|)$, because line~\ref{line:decreaseInter} in the function $localSupport()$ has the complexity $O(|n(v_1)|+|n(v_2)|)$ and all the other steps in the function $localSupport()$ have the complexity $O(1)$. For line 8, the complexity is $O(|n(v_1)|+|n(v_2)|)$ as well. As a result, the complexity of Algorithm~\ref{algo:out_delete} is $O(n_q \cdot deg_{max})$, where $n_q$ is the number of edges ever put into the queue and $deg_{max}$ is the maximum degree of the nodes in the graph. To analyze the bounds: in a massive graph, $deg_{max}$ can be considered as a constant. We also have $n_q  = O(|E_{l}|)$, where $|E_{l}|$ is the number of edges whose local support is affected. The reason is that an edge $(a,b)$ will be put into the queue at most $2|S_{a,b}|$ times and $2|S_{a,b}| < 2|deg_{max}|$, so $n_q < 2|deg_{max}||E_{l}|$. As a result, we have $n_q = O(|E_l|)$. As the affected edges are far less than the number of edges in the graph ($|E_l| << |E|$), the maintenance algorithm is very efficient. The cost of the maintenance algorithm is linear to the number of affected edges, which is $O(E_l)$.

\end{proof} 

\subsection{Inserting an edge}
\label{subsect:insert}

When an edge is inserted, the truss maintenance includes two parts. Firstly, according to Observation~\ref{lemma:obvious} and Lemma~\ref{lemma:how_much_effect}, for some existing edges, the truss numbers should be increased by 1, and all the other existing edges keep their truss numbers unchanged. 
Secondly, for the newly inserted edge, the truss number needs to be determined.

\subsubsection{Different from deleting an edge}


One may think to design an approach for inserting an edge similar to the approach designed for deleting an edge. The idea of the hypothesized algorithm is as follows: we start from the to-be-inserted edge. Firstly, we find which neighbor edges will have their global support increased by 1 and put them into a queue. Let $(v_1,v_2)$ be one of the edges in the queue and let its truss number be $\phi((v_1,v_2))$, we then check whether the local support of $(v_1,v_2)$ with respect to the truss number $\phi((v_1,v_2))+1$ is no less than $\phi((v_1,v_2))-1$. If so, it means that $(v_1,v_2)$ has local support at least $\phi((v_1,v_2))-1$. We then increase the truss number of $(v_1,v_2)$ by 1 and go on to inspect the neighbor edges in $E_{S_{v_1,v_2}\leftrightarrow \{v_1,v_2\}}$.

The hypothesized algorithm has a similar paradigm as the deletion counterpart. However, it is not correct. The problem is that, we may miss to update some edges. In other words, after an insertion, some edges may have truss numbers increased from $k$ to $(k+1)$, but we cannot find them using the algorithm. It is safe to increase the truss number of an edge $(v_1,v_2)$ if the local support of the edge in the current $(\phi((v_1,v_2))+1)$-truss is already no less than $\phi((v_1,v_2))-1$, however, when we check the condition ``$localSupport((v_1,v_2),\phi(v_1,v_2)+1) \geq \phi((v_1,v_2))-1 $'', some edges which should finally be put into the $(\phi((v_1,v_2))+1)$-truss have not been found, so at this time, the local support may be underestimated, as a result, we may miss to update some $(\phi((v_1,v_2))+1)$-truss edges. 
In the following section, we introduce the correct approaches to update truss numbers in case of edge insertions.

\subsubsection{Algorithm}

For truss maintenance under insertion, an intuitive, naive but correct approach is: starting from the inserted edge $e=(a,b)$, do a depth-first or breadth-first search to find the $k_{min}(e)$-truss. During the process, eliminate the edges with truss numbers more than $min(|S_{a,b}|+1,k_{max}(e))$. As a result, we can obtain a pruned $k_{min}(e)$-truss. In the end, we add the edge $e$ into the pruned $k_{min}(e)$-truss and do a truss decomposition. The truss numbers of all the edges will then be up-to-date. The correctness of this naive method is obvious, guaranteed by Theorem~\ref{theo:which_affected_insert}. However, the drawback of the naive approach is that it ignores the effect actually started from the inserted edge. In an extreme case, if we discover that the first possibly affected set of edges $E_{S_{a,b}\leftrightarrow \{a,b\}}$ are actually not affected after inserting edge $e$, we can stop the maintenance process right away, because no other edges will be affected. We only need to determine the truss number for the newly inserted edge $e$. Along this line, we design a mark-and-verify approach for truss maintenance under an edge insertion.

\begin{algorithm}[t]
\caption{ A mark-and-verify truss maintenance algorithm for edge insertion}
\label{algo:out_insert}
{ 
\textbf{Input:} a graph $G=(V,E)$ with truss numbers associated with its edges, a to-be-inserted edge $(a,b)$ \\
 \textbf{Output:} $G=(V,E \cup \{(a,b)\})$ with updated truss numbers associated with its edges 
  \begin{algorithmic}[1]
  					\STATE insert edge $(a,b)$;
            \STATE put the edges in $E_{S_{a,b}\leftrightarrow \{a,b\}}$ confined by Theorem~\ref{theo:which_affected_insert} into a queue;
                        \WHILE{the queue is not empty}
                        				\STATE take an edge $(v_1,v_2)$ from the queue;
                        				\IF{ $localSupport2((v_1,v_2),\phi(v_1,v_2)) \geq \phi((v_1,v_2))-1 $ }\label{line:increaseS}
                        							\IF{ $(v_1,v_2)$ is not marked }
                        									\STATE mark the edge $(v_1,v_2)$;
                        									\STATE put the edges in $E_{S_{v_1,v_2}\leftrightarrow \{v_1,v_2\}}$ confined by Theorem~\ref{theo:which_affected_insert} into the queue;
                        							\ENDIF
                       					\ELSE
                       								\IF{ $(v_1,v_2)$ is marked }
                       										\STATE unmark the edge $(v_1,v_2)$;
                        									\STATE record the truss number of $(v_1, v_2)$ as unchanged, i.e. $unchanged((v_1, v_2))=true$;
                        									\STATE put the edges in $E_{S_{v_1,v_2}\leftrightarrow \{v_1,v_2\}}$ confined by Theorem~\ref{theo:which_affected_insert} into the queue; 
                        							\ENDIF
                        				\ENDIF
                        				
                        \ENDWHILE
                        \STATE start from the edge $(a,b)$, do a depth-first or breadth-first search: for all marked edges, unmark them and increase the truss numbers by 1;
                        \STATE compute the truss number of the inserted edge $(a,b)$;
             \RETURN $G$;
   \end{algorithmic}
}
\end{algorithm}

\begin{algorithm}[t]
\caption{ $localSupport2((v_1,v_2), k)$ }
\label{algo:localSupport2}
{
\textbf{Input:} an edge $(v_1,v_2)$ and a truss number $k$;\\
\textbf{Output:} the upperbound local support of $(v_1,v_2)$ with respect to the truss number $k$;
  \begin{algorithmic}[1]
            \STATE $lSupport \leftarrow 0$;
            \FOR{ each node $v \in n(v_1)\cap n(v_2)$}\label{line:increaseInter}
            		\IF{ $\phi((v,v_1)) \geq k \wedge unchanged((v,v_1))==false \wedge  \phi((v,v_2)) \geq k  \wedge unchanged((v,v_2))==false$ }
            				\STATE $lSupport \leftarrow lSupport + 1$;
            		\ENDIF
            \ENDFOR
            \RETURN $lSupport$;
   \end{algorithmic}
}
\end{algorithm}

The idea of mark-and-verify approach is that we determine whether the truss number of an edge should be increased in two steps. In the mark step, we try to find which edges may have their truss numbers be increased. An edge is marked if the truss number of the edge is possible to be increased by 1. As the inspection goes on, once we find an edge cannot be updated, we go on checking the neighbor marked edges to see whether there are other marked edges are affected, i.e. do not need to be updated as well. The marking steps and the verifying steps are intertwined.

The mark-and-verify algorithm is given in Algorithm~\ref{algo:out_insert}. We now explain in detail. Firstly, we insert the edge $(a,b)$ and put the edges in  $E_{S_{a,b}\leftrightarrow \{a,b\}}$ confined by Theorem~\ref{theo:which_affected_insert} into the queue (line 1 and 2). While the queue is not empty (line 3), we take an edge from the queue (line 4), and check whether the condition ``$localSupport2((v_1,v_2),\phi(v_1,v_2)) \geq \phi((v_1,v_2))-1 $ ''. If so, it means the edge $(v_1,v_2)$ may have its truss number increased by 1, because $(v_1,v_2)$ has enough local support at the moment. (Function $localSupport2()$ will be introduced in detail in the next paragraph.) As a result, if $(v_1,v_2)$ has not been identified as possibly-affected, i.e. $(v_1,v_2)$ is not marked (line 6), we mark the edge $(v_1,v_2)$ and put the edges in $E_{S_{a,b}\leftrightarrow \{a,b\}}$ confined by Theorem~\ref{theo:which_affected_insert} into the queue (line 7 and 8). If $(v_1,v_2)$ is marked, nothing needs to be done. We just leave $(v_1,v_2)$ as marked (possibly-affected). On the contrary, if ``$localSupport2((v_1,v_2),\phi(v_1,v_2)) < \phi((v_1,v_2))-1 $ '', it means the edge $(v_1,v_2)$ will keep its truss number unchanged, because it does not have enough local support with respect to the truss number $\phi((v_1,v_2)) + 1$. As a result, if $(v_1,v_2)$ is marked (identified as possibly-affected previously) in line 11, we unmark the edge $(v_1,v_2)$, record the truss number of $(v_1,v_2)$ as unchanged and put the edges in $E_{S_{a,b}\leftrightarrow \{a,b\}}$ confined by Theorem~\ref{theo:which_affected_insert} into the queue (line 12, 13 and 14). Finally, in line 18, for all the left marked edges, we increase their truss numbers by 1, because these are the affected edges. We also need to unmark them eventually to prepare for the maintenance of the next update operation. In line 19, we compute the truss number of the newly inserted edge $(a,b)$.

We now explain the function $localSupport2()$. The function takes an edge $(v_1,v_2)$ and a truss number $k$ as input. Here, $k = \phi(v_1,v_2)$. The function returns an upperbound of the local support $sup((v_1,v_2),T_{\phi(v_1,v_2)+1})$, where $sup((v_1,v_2),T_{\phi(v_1,v_2)+1})$ is the support of the edge $(v_1,v_2)$ in a $(\phi(v_1,v_2)+1)$-truss. In the beginning, all the edges with truss number $k$ have the possibility to become $(k+1)$-truss edges. As a result, $localSupport2((v_1,v_2),k)$ returns the highest bound of $sup((v_1,v_2),T_{\phi(v_1,v_2)+1})$. As the inspection goes on, some edges are found to be not affected, e.g. there may be a node $v \in n(v_1) \cap n(v_2)$ having $unchanged((v,v_1))$ changed to $true$ by Step 13 in Algorithm~\ref{algo:out_insert}. In such a case, the function $localSupport2((v_1,v_2),k)$ will return a smaller value on the next call. As more edges are found to be not affected, the value returned by the function will become smaller. Once the function $localSupport2((v_1,v_2),k)$ returns a value that is less than $k-1$, we can conclude that $(v_1,v_2)$ will not have its truss number increased by 1. Lemma~\ref{lemma:insert_mark} guarantees the conclusion.    


\begin{lemma}
Let edge $(a,b)$ have the truss number at least $k$, if the local support of $(a,b)$ in the $k$-truss containing $(a,b)$ is smaller than $(k-1)$, then edge $(a,b)$ cannot be in a $(k+1)$-truss.
\label{lemma:insert_mark} 
\end{lemma}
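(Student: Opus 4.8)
The plan is to argue by contradiction, using two elementary facts that follow directly from the definition of truss. First I would observe that every $(k+1)$-truss is in particular a $k$-truss: a $(k+1)$-truss is a non-trivial connected subgraph in which every edge lies in at least $(k+1)-2 = k-1$ triangles of the subgraph, and $k-1 \geq k-2$, so it already satisfies the defining requirement of a $k$-truss. Second I would recall (or establish) that among all $k$-trusses containing a fixed edge $e$ there is a unique maximal one, obtained as their union: if $G_1$ and $G_2$ are $k$-trusses both containing $e$, then $G_1 \cup G_2$ (union of vertex sets and of edge sets) is non-trivial and connected, since the two share $e$, and every one of its edges lies in at least $k-2$ triangles that were already present in the $G_i$ it belongs to, hence in $G_1 \cup G_2$; so $G_1 \cup G_2$ is again a $k$-truss containing $e$, and therefore the union of \emph{all} $k$-trusses containing $e$ is itself a $k$-truss, the maximal one. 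In the statement of the lemma I read ``the $k$-truss containing $(a,b)$'' as this maximal $k$-truss, which I denote $T_k$; it exists because $\phi((a,b)) \geq k$.

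Given these facts, the argument is short. Suppose for contradiction that $(a,b)$ lies in some $(k+1)$-truss $T'$. By the first fact, $T'$ is a $k$-truss containing $(a,b)$, so by the second fact $T' \subseteq T_k$. Since $T'$ is a $(k+1)$-truss, the edge $(a,b)$ is contained in at least $k-1$ triangles of $T'$, and every such triangle is also a triangle of $T_k$ because $T' \subseteq T_k$. Hence $sup((a,b), T_k) \geq k-1$, contradicting the hypothesis that the local support of $(a,b)$ in $T_k$ is smaller than $k-1$. Therefore no such $(k+1)$-truss $T'$ exists, i.e.\ $(a,b)$ cannot be in a $(k+1)$-truss.

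The only step that calls for care is the inclusion $T' \subseteq T_k$, i.e.\ the uniqueness/maximality of the $k$-truss through a given edge; everything else is just unwinding definitions. If one prefers to avoid naming the maximal truss, the same conclusion follows by applying the union argument to $T'$ and $T_k$ directly: $T' \cup T_k$ is a $k$-truss containing $(a,b)$, and maximality of $T_k$ forces $T' \cup T_k = T_k$, so $T' \subseteq T_k$. I expect this union/maximality observation to be the main (and essentially only) obstacle, and it is settled by the argument sketched above.
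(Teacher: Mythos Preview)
Your proof is correct and follows essentially the same approach as the paper: assume $(a,b)$ lies in a $(k+1)$-truss, observe that a $(k+1)$-truss is a $k$-truss, and conclude that the support of $(a,b)$ in the maximal $k$-truss must be at least $k-1$, contradicting the hypothesis. You are actually more careful than the paper in justifying the inclusion $T' \subseteq T_k$ via the union-closure argument; the paper simply asserts the implication in one line.
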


\begin{proof} By contradiction, assume $(a,b)$ can be in a $(k+1)$-truss, then a necessary condition is that the edge should have local support at least $k-1$ in the $(k+1)$-truss. Since a $(k+1)$-truss is also a $k$-truss, it implies that $(a,b)$ must have at least $k-1$ support in the $k$-truss containing $(a,b)$. A contradiction appears. 
\end{proof}




In the following, we prove Algorithm~\ref{algo:out_insert} is correct and analyze the algorithm complexity.

\begin{lemma}
Algorithm~\ref{algo:out_insert} is correct. The complexity is $O(|E_l|)$, where $|E_l|$ is the number of edges whose local support are affected and $|E_l| << |E|$.
\end{lemma}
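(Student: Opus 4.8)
The plan is to prove two things in turn: \emph{correctness} --- that when Algorithm~\ref{algo:out_insert} halts, the set of existing edges whose truss numbers have been raised is exactly the set $E^*$ of edges whose truss numbers genuinely increase when $(a,b)$ is inserted, and the value finally assigned to $(a,b)$ in line~19 is correct --- and \emph{complexity} --- that the running time is $O(|E_l|)$ under the standing assumption that $deg_{max}$ is a constant on massive graphs. I would lay the structural groundwork using the results already proved: by Observation~\ref{lemma:obvious} and Lemma~\ref{lemma:how_much_effect} every existing edge either keeps its truss number or gains exactly one; by Theorem~\ref{theo:which_affected_insert} every member of $E^*$ has truss number in $[k_{min}(e),\min(|S_{a,b}|+1,k_{max}(e))]$; and by Lemma~\ref{lemma:together} the inserted edge lies in the $(\phi(e')+1)$-truss of any affected edge $e'$. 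The first two facts justify the Theorem~\ref{theo:which_affected_insert} filter on lines~2, 8 and~14 --- it never discards an edge that could be affected --- while the reachability of every member of $E^*$ from $(a,b)$ I would establish by induction along a chain of shared triangles: the edge that lifted the support of an affected edge is itself affected, hence gets marked, hence re-seeds the queue via line~8, so every affected edge is enqueued at least once.

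The core of the correctness argument is the behaviour of $localSupport2$ (Algorithm~\ref{algo:localSupport2}), driven by two observations. First, \emph{monotonicity}: the flag $unchanged(\cdot)$ is only ever switched from $false$ to $true$ (line~13) and never back, so for a fixed edge and fixed $k$ the value of $localSupport2$ is non-increasing over the run; moreover, because only edges whose truss number lies in the Theorem~\ref{theo:which_affected_insert} range are ever enqueued, an edge with truss number strictly above that range is never flagged and therefore always contributes its supporting triangle, which is essential for the count not to drop a genuine triangle. Second, \emph{over-estimation}: at any moment $localSupport2((v_1,v_2),\phi((v_1,v_2)))$ is at least the local support of $(v_1,v_2)$ in the $(\phi((v_1,v_2))+1)$-truss of the final updated graph, since every common neighbour realising a surviving supporting triangle there uses two incident edges whose final truss numbers are at least $\phi((v_1,v_2))+1$, which already forces their original truss numbers to meet the $\geq\phi$ test. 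Combining this over-estimate with Lemma~\ref{lemma:insert_mark}: once $localSupport2((v_1,v_2),\phi((v_1,v_2)))<\phi((v_1,v_2))-1$, the final local support is also below $\phi((v_1,v_2))-1$, so $(v_1,v_2)$ cannot reach a $(\phi((v_1,v_2))+1)$-truss --- hence stamping it $unchanged$ in line~13 is sound, and symmetrically a never-failing (still marked) edge is one that line~18 raises correctly.

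What remains is \emph{completeness} --- that no edge of $E^*$ is ever unmarked in line~12 --- and this is the step I expect to be the main obstacle. I would attack it as a fixpoint/peeling argument in the spirit of truss decomposition. By induction on the order in which edges enter the set $U$ of edges stamped $unchanged$, I would show that every edge put into $U$ is genuinely unaffected: at the moment an edge is stamped, the common-neighbour triangles dropped from its $localSupport2$ count involve only earlier members of $U$ (genuinely unaffected, by the induction hypothesis) together with edges of truss number below the current threshold, so the count it sees already equals its true local support in the would-be $(\phi+1)$-truss, and Lemma~\ref{lemma:insert_mark} certifies it. Dually, the edges still marked at termination must then form a self-consistent layer --- each has at least $\phi-1$ supporting triangles among edges that are of strictly higher truss number or also still marked --- which by the maximality in the definition of the truss number forces each of them to gain exactly one. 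The delicate part is making the induction hypothesis strong enough to survive the interleaving of marking and unmarking across different truss layers; I would phrase it as a simultaneous invariant on $U$ and on the currently-marked set, maintained through every iteration of the while loop, rather than proving the two directions separately.

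Finally, the truss number of the newly inserted edge $(a,b)$ computed in line~19 is obtained by a bounded local truss-decomposition seeded at $(a,b)$ and confined to the already-identified affected region, whose correctness follows from Lemma~\ref{lemm:support} and Theorem~\ref{theo:which_affected_insert}. For the complexity I would reuse the deletion analysis essentially verbatim: with hashing, $n(v_1)\cap n(v_2)$ costs $O(|n(v_1)|+|n(v_2)|)=O(deg_{max})$; each dequeue triggers one $localSupport2$ call and at most one re-seed, each $O(deg_{max})$; an edge $(v_1,v_2)$ is enqueued at most $O(|S_{v_1,v_2}|)=O(deg_{max})$ times and --- crucially, from the structural lemmas --- only edges whose local support is actually affected are ever enqueued, so the number of dequeues is $n_q=O(deg_{max}\cdot|E_l|)$; line~18's traversal and line~19's local decomposition are likewise confined to the $O(|E_l|)$ affected edges. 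Treating $deg_{max}$ as a constant then gives the claimed $O(|E_l|)$ with $|E_l|\ll|E|$.
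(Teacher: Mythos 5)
Your overall strategy is the same mark-and-verify argument the paper uses --- a monotone, over-estimating $localSupport2$ combined with Lemma~\ref{lemma:insert_mark} to justify stamping edges as unchanged, propagation through shared triangles for completeness, and the same $n_q = O(deg_{max}\cdot|E_l|)$ accounting (with the extra factor for re-enqueueing on unmark) for the complexity. Your write-up is in fact considerably more detailed than the paper's two-sentence sketch; the soundness of unmarking and the complexity analysis are fine.

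The genuine gap is in the step you yourself flag as delicate: the claim that every edge still marked at termination ``has at least $\phi-1$ supporting triangles among edges that are of strictly higher truss number or also still marked.'' That is not the invariant the algorithm maintains. $localSupport2((v_1,v_2),k)$ counts a common neighbour $v$ whenever both $(v,v_1)$ and $(v,v_2)$ have old truss number at least $k$ and are not stamped $unchanged$ --- but the stamp is only ever applied to an edge that was previously \emph{marked} (line~11 guards line~13). An edge of truss number exactly $k$ that is dequeued and fails its own test on first inspection (e.g.\ a critical edge with only $k-2$ supporting triangles in its $k$-truss) is neither marked nor stamped: it sits in limbo with $unchanged$ still $false$, will never be promoted, yet continues to be counted as a supporter by every marked neighbour for the rest of the run. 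So the final count of a still-marked edge can include supporters whose final truss number is only $k$, and your ``self-consistent layer'' conclusion --- hence the soundness of the promotions in line~18 --- does not follow as stated. Your completeness induction on the set $U$ has the mirror-image problem, since it only accounts for edges that enter $U$, not for these never-marked failures. To close the argument you would either have to show that such limbo edges can never be counted supporters of a surviving marked edge (which I do not see following from Theorem~\ref{theo:which_affected_insert} alone), or stamp and propagate from every failing edge regardless of its mark status and then redo the monotonicity and termination arguments. The paper's own proof is silent on exactly this point, so this is a place where more than a transcription of the published argument is required.
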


\begin{proof}
The first possibly-affected set of edges are put into the queue in line 2. For the other edges, the local support condition ``$localSupport2((v_1,v_2),\phi(v_1,v_2)) \geq \phi((v_1,v_2))-1 $'' and line 6-8 guarantee that all possibly-affected edges will be put into the queue. ``$localSupport2((v_1,v_2),\phi(v_1,v_2)) < \phi((v_1,v_2))-1 $'' and line 11-14 guarantee that once an edge is found to be not affected. The effect will be spread to the full set of other edges.  

Complexity of Algorithm~\ref{algo:out_insert}: similar to Algorithm~\ref{algo:out_delete}, the dominating step of Algorithm~\ref{algo:out_insert} is the while clause, from line 3 to line 17. In the while clause, the critical steps are line 5, line 8 and line 14. All of them have complexity $O(|n(v_1)|+|n(v_2)|)$. Again, the complexity of Algorithm~\ref{algo:out_insert} is $O(n_q \cdot deg_{max})$. Different from the deletion case, $n_q$ is bounded by $4|deg_{max}||E_l|$ because the edges in $E_{S_{v_1,v_2}\leftrightarrow \{v_1,v_2\}}$ will be put into the queue when we both mark and unmark the edge $(v_1,v_2)$, while for the deletion algorithm, we only put the edges in $E_{S_{v_1,v_2}\leftrightarrow \{v_1,v_2\}}$ into the queue when we mark the edge. Nevertheless, the worst case complexity of Algorithm~\ref{algo:out_insert} is also $O(|E_l|)$, where we usually have $|E_l| << |E|$.
\end{proof}

\section{Query index maintenance}
\label{sect:index}

The ultimate purpose of maintaining truss numbers for each edge is to serve queries. In the real world, a graph is often involving and thus the trusses are changing from time to time. At a particular time, if a user wants to know the current $k$-trusses, it is desired we can answer this query as soon as possible. In this section, we will introduce how maintaining truss numbers for each edge can help answer the queries quickly, and what other maintenance work we need to do, i.e. query index maintenance.

\subsection{Querying the $k$-trusses}
To find the $k$-trusses from a graph, if we do not have the truss numbers recorded on the edges, we may need to recompute the $k$-trusses from scratch if the graph has been updated since the last query, because the $k$-trusses may change after the updates. Apparently, recomputing the $k$-trusses is very time-consuming, even though some pruning methods can be designed to speed up the process, e.g. pruning the edges that have global support less than $k-2$. On the other hand, if we have the truss numbers recorded and maintained, things will be easier. A simple method is to traverse the graph and group connected edges with truss number $k$ together. Each connected component is then a $k$-truss.

\subsection{Indexing the $k$-trusses}

To avoid traversing the whole graph to answer a $k$-truss query, we can take one edge from a $k$-truss and represent the $k$-truss with the edge which we call a representative. In this way, the $k$-truss containing the representative can be found by performing a depth-first or breadth-first traversal from the representative on the graph. To be specific, let the representative be $e$, in the traversal, when we meet other edges with truss number $\phi(e)$, we include the new edges into the truss containing the representative, and finally we can find a truss with truss number $\phi(e)$. As a result, if we want to know all the $k$-trusses, we just need to find out $k$ edges each of which is in a standalone truss, and then find the trusses by traversing the graph from those edges. That means if there are $t$ $k$-trusses, we can index $t$ edges to represent the $t$ $k$-trusses. 

How to choose a representative? In general, a representative can be chosen randomly. It takes similar cost to find a $k$-truss starting from different representatives within the $k$-truss. 

\subsection{Maintaining the index}

In this section, we introduce how to maintain the index, i.e. maintain the representatives.

\subsubsection{Deletion}

Firstly, we discuss a deletion. If a deletion occurs, a number of edges will be affected from the deleted edge $e$. For a particular truss number $k \in [k_{min}(e),\phi(e)]$, there are two types of change related to $k$-trusses: (1) some $(k+1)$-truss edges become $k$-truss edges; (2) some $k$-truss edges become $(k-1)$-truss edges. As to the representatives, there are three cases: (1) some $k$-truss representatives may stay as representatives because their truss numbers stay the same; (2) some $k$-truss representatives may stop being representatives because their truss numbers change to $k-1$; (3) in addition, we may need to add some new representatives into the index to represent some newly generated $k-1$-trusses.  

Consider the original $k$-truss $T_{k}$ containing the deleted edge $e$, after deleting $e$, we may further delete some other edges (starting from $e$) whose truss numbers downgraded to $k-1$ in $T_{k}$. As a result, the updated $T_{k}$ may shrink into a smaller $k$-truss, or may be decomposed into several $k$-trusses, or may even disappear ($T_{k}$ becomes a part of a $(k-1)-truss$). To maintain the representatives for the $k$-trusses, the method is to search the updated $T_{k}$ and choose a representative for each connected component. If a representative is not in the current $k$-truss index, we add the new representative into the index. If a previous representative has the truss number downgraded from $k$ to $k-1$, then remove the representative from the $k$-truss index. In this way, truss index is maintained for deletion.


For each edge, we can record its representative. For the edges in the non-affected area, if their representatives are deleted, we then choose themselves as representatives, and propagate the information onwards until all the bridges between the non-affected area and the affected area have been examined.  For the affected area, we explore from the deleted edge to find a connected affected area with truss number $k-1$ (the affected area had truss number $k$ before the deletion). If this area is connected with an edge represented by an edge $e_{rep}$ with unchanged truss number $k-1$, then all the edges in the connected affected area will be represented by the representative $e_{rep}$, otherwise, we randomly choose an edge from the connected affected area as a new representative to represent the area. 

\subsubsection{Insertion}

Secondly, we discuss an insertion. If a insertion occurs, similarly a number of edges will be affected from the inserted edge $e=(a,b)$. For a particular truss number $k \in [k_{min}(e),min(|S_{a,b}|+1, k_{max}(e))]$, there are also two types of change related to the $k$-truss: (1) some $k$-truss edges become $(k+1)$-truss edges; (2) some $(k-1)$-truss edges become $k$-truss edges. For the representatives, (1) some $k$-truss representatives may stay as representatives; (2) some existing $k$-truss representatives may be removed, because, after inserting the edge $e$, some existing $k$-trusses can be merged into a larger $k$-truss, so we only need one edge to represent the larger $k$-truss; finally, we may need to add some new $(k+1)$-truss representatives if there are newly generated $(k+1)$-truss.   

Again, let $T_{k}$ be a $k$-truss before inserting $e$, where $k$ is in the range $[k_{min}(e),min(|S_{a,b}|+1, k_{max}(e))]$, (1) if $k > \phi(e)$, the $k$-truss representatives remain the same, because $e$ is not in any $k$-truss, and thus does not affect the $k$-truss indexes; (2) if $k \leq \phi(e)$, we start from the edge $e$ and traverse the subgraph $T_{k}\cup\{e\}$ by inspecting the connected edges with truss number $k$. During the process, if we meet more than one existing representatives, we keep one of them in the index and delete other representatives from the index. If we do not meet any existing representative, we add one of the edges with truss number $k$ into the index to represent the current $k$-truss. In this way, truss index is maintained for insertion.

Similarly, for a non-affected area, the truss numbers of the edges are not changed. The edges will be represented by the same edge representatives even though the edge representatives may upgrade from $k$-trusses to $k+1$-trusses. This is because  a $(k+1)$-truss representative can also be a $k$-truss representative, since the representative can be guaranteed to be in a $k$-truss. There may be repeated representatives representing the same connected truss subgraph. The repetition will be resolved when querying the $k$-trusses when repeated representatives will be removed. For the affected area, for $k \in [k_{min}(e), min(|S_{a,b}|+1, k_{max}(e))]$, we do the following: if an affected edge $e'$ has the truss number increased to $k+1$, but the truss number of its representative is still $k$, we change the truss representative of $e'$ into $e'$ itself, and then propagate the effect to the neighbor edges.

\section{Experiments}
\label{sect:experiments}

In this section, we report the performance of the truss maintenance algorithms. All experiments are done on a desktop with Intel(R) Core(TM) i5-2400 CPU at 3.1GHz and 8GB RAM. The operating system is Windows 7 Enterprise, and the code is written in Java.

\subsection{Datasets and Approaches}
We use three datasets to test the algorithms, Epinions social network (soc-Epinions1) \cite{DBLP:conf/semweb/RichardsonAD03}, Enron email network (email-Enron)~\cite{DBLP:journals/corr/abs-0810-1355}, and Slashdot social network (Slashdot0811)~\cite{DBLP:journals/corr/abs-0810-1355}. Epinions social network is a who-trust-whom online social network of a general consumer review site (www.Epinions.com). Members of the site decide whether to trust each other. Enron email communication network covers all the email communication within a dataset of around half million emails, which are collected and published by Federal Energy Regulatory Commission. Slashdot social network is a network that contains friend links between the users of Slashdot. All the above datasets are in Stanford Large Network Dataset Collection\footnote{http://snap.stanford.edu/data/}. We give the details of each dataset in terms of number of vertices, edges, and average degrees in Table~\ref{tab:dataset}. We randomly generated enough number of updates (insertions and deletions) for each dataset and stored them for reuse. We guarantee that when we test different approaches on a particular dataset, we use the same set of updates. The approaches we have compared are listed in Table~\ref{tab:vertexApproach}

\begin{table}[t]
\renewcommand{\arraystretch}{1.3}
\centering
\caption{Datasets}
\label{tab:dataset}
  \begin{tabular}{c|c|c|c}
  \hline
  \hline
  & vertices & edges & maximum truss \\
  \hline
  Epinions network & 75879 & 508837 & 33 \\ 
  
  \hline
  Enron email network & 36692 & 183831 & 22 \\
  \hline
  Slashdot social network & 77360 & 905468 & 34 \\
  \hline
  \hline
  \end{tabular}
\end{table}

\begin{table}[t]
\renewcommand{\arraystretch}{1.3}
\centering
\caption{Summary of the approaches}
\label{tab:vertexApproach}
  \begin{tabular}{l|l}
  \hline
  \hline
  batchUpdate 			& process the updates in a batch mode and \\
  									& do a truss decomposition to find the new \\
  									& $k$-truss \\
  \hline
  progressiveUpdate & maintain the truss number for each edge, \\ 
  									& process the updates on-the-fly and search\\
  									& the graph to find the new $k$-truss \\
  									& using the recorded truss numbers \\  
  \hline
  indexedUpdate 		& maintain the truss number for each edge, \\ 
         						& maintain  truss representatives as an index, \\
         						& process the updates on-the-fly and find \\
         						& the new $k$-truss by searching the graph \\
         						& from the representatives \\
  \hline
  \hline
  \end{tabular}
\end{table}

\subsection{Effect of batchUpdate vs progressiveUpdate}
In this section, we compare the time cost of batchUpdate and progressiveUpdate. The batchUpdate approach stands for processing the updates in a batch and then doing truss decomposition on the graph to find the $k$-truss from the graph. The decomposition starts from smaller truss numbers and stops at the number $k$. The progressiveUpdate approach means that the algorithms introduced in Section~\ref{sect:algorithms} are used to maintain truss numbers while performing updates and the $k$-trusses are evaluated by searching the graph using the truss numbers. Fig~\ref{fig:epinion} shows that progressiveUpdate is much better than batchUpdate when the number of updates is moderate for all the datasets. As the number of updates increase, the gap between batchUpdate and progressiveUpdate decreases. The cost of progressiveUpdate will catch up with batchUpdate when there are 10,000 updates. Fig.~\ref{fig:email} and~\ref{fig:slashdot} have similar trends, but the catch-up points are different.

\begin{figure*}[t]
  \centering
  \subfigure[$k=30$]{\label{fig:epinion30}
    \includegraphics[height=48mm,width=60mm]{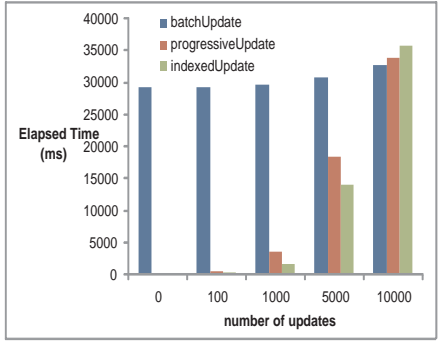}}
 \hskip 0.5in
  \subfigure[$k=25$]{\label{fig:epinion25}
    \includegraphics[height=48mm,width=60mm]{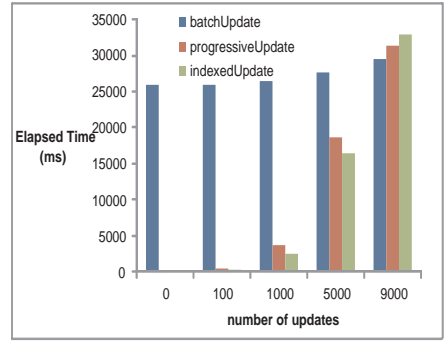}} 
 \hskip 0.5in
  \subfigure[$k=20$]{\label{fig:epinion20}
    \includegraphics[height=48mm,width=60mm]{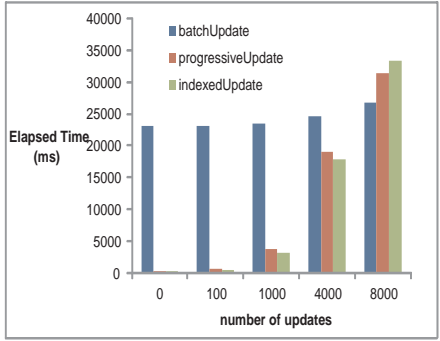}}
  \hskip 0.5in
  \subfigure[$k=15$]{\label{fig:epinion15}
    \includegraphics[height=48mm,width=60mm]{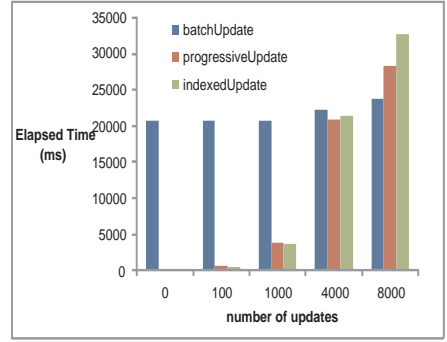}}    
\caption{Epinions Social Network}
\label{fig:epinion}
\end{figure*}

\subsection{Effect of index vs non-index}
In this section, we compare the performance of progressiveUpdate and indexedUpdate. Compared with the progressiveUpdate approach, indexedUpdate additionally builds and maintains the truss indexes, i.e. truss representatives. Queries will be answered by traversing the graph from the representatives and the representatives are maintained when processing edge updates. For most datasets, it can be seen that indexedUpdate is better than progressiveUpdate when the number of updates is moderate. As the number of updates becomes larger, indexedUpdate becomes worse than progressiveUpdate, such as the result shown in Fig.~\ref{fig:epinion15},~\ref{fig:email10} and~\ref{fig:slash15}. The reason is probably that indexedUpdates spends more time on maintaining the indexes.

\subsection{Effect of different truss number $k$}
In this section, we report the effect of different truss number $k$. As we can see, for all datasets, for larger numbers of $k$'s, progressiveUpdate and indexedUpdate are much better batchUpdate, i.e. they can accommodate more updates before they become worse than batchUpdate. For example, for Enron email data, 8,000 updates make progressiveUpdate worse than batchUpdate for $k$=18 (see Fig.~\ref{fig:email18}), while only 6,000 updates make progressiveUpdate worse (see Fig.~\ref{fig:email10}). The reason is that, for a larger $k$, only a small portion of the graph needs to be accessed when processing edge updates. A larger $k$ implies a smaller $k$-truss. For smaller $k$'s, the advantage of progressiveUpdate and indexedUpdate compared to batchUpdate become less obvious.

\subsection{Effect of different datasets}

In this section, we report the effect of different datasets. In Fig.~\ref{fig:slashdot}, for Slashdot data (average clustering coefficient: 0.0555), progressiveUpdate and indexedUpdate are better than batchUpdate up to 16,000 updates for $k$=30, while for Epinions data (average clustering coefficient: 0.1378), progressiveUpdate and indexedUpdate are only better than batchUpdate under around 10,000 updates (Fig.~\ref{fig:epinion30}). The reason is that Slashdot data is relatively sparser and is not a centralized dataset. As a result, the updates are usually done less costly. Epinions data is more centralized than Slashdot data, and thus each update needs to access more nodes. Among the three datasets, Enron email network (average clustering coefficient: 0.4970) is the smallest dataset, on which all the approaches run faster, however, progressiveUpdate and indexedUpdate allow the smallest number of updates before perform worse than the naive batchUpdate approach. Our maintenance approaches work better for decentralized data.

\begin{figure*}[t]
  \centering
  \subfigure[$k=22$]{\label{fig:email22}
    \includegraphics[height=48mm,width=60mm]{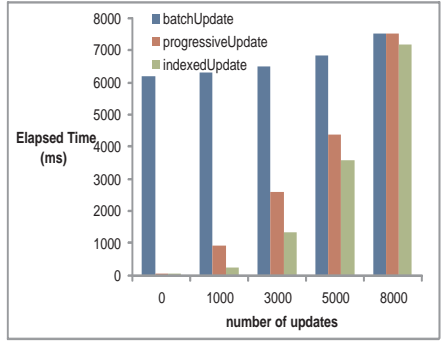}}
 \hskip 0.5in
  \subfigure[$k=18$]{\label{fig:email18}
    \includegraphics[height=48mm,width=60mm]{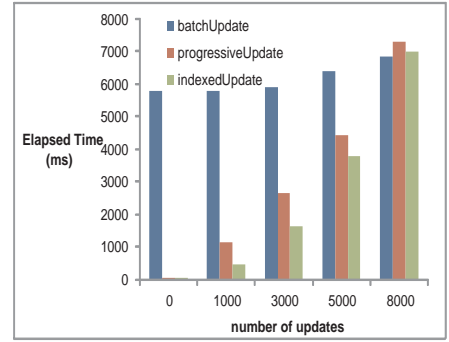}} 
 \hskip 0.5in
  \subfigure[$k=14$]{\label{fig:email14}
    \includegraphics[height=48mm,width=60mm]{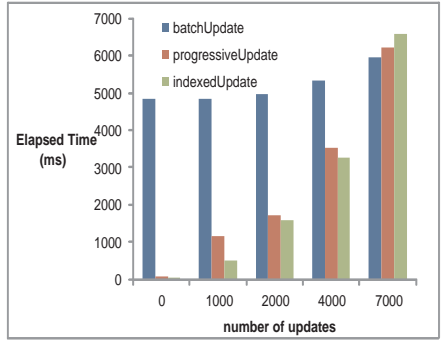}}
 \hskip 0.5in
  \subfigure[$k=10$]{\label{fig:email10}
    \includegraphics[height=48mm,width=60mm]{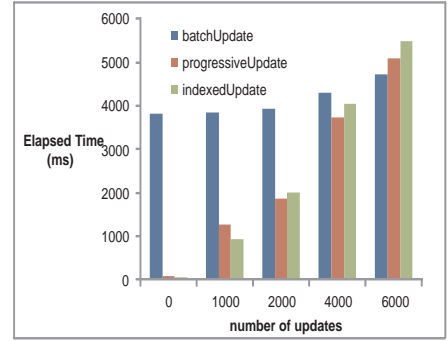}}    
\caption{Enron Email Network}
\label{fig:email}
\end{figure*}

\begin{figure*}[t]
  \centering
  \subfigure[$k=30$]{\label{fig:slash30}
    \includegraphics[height=48mm,width=60mm]{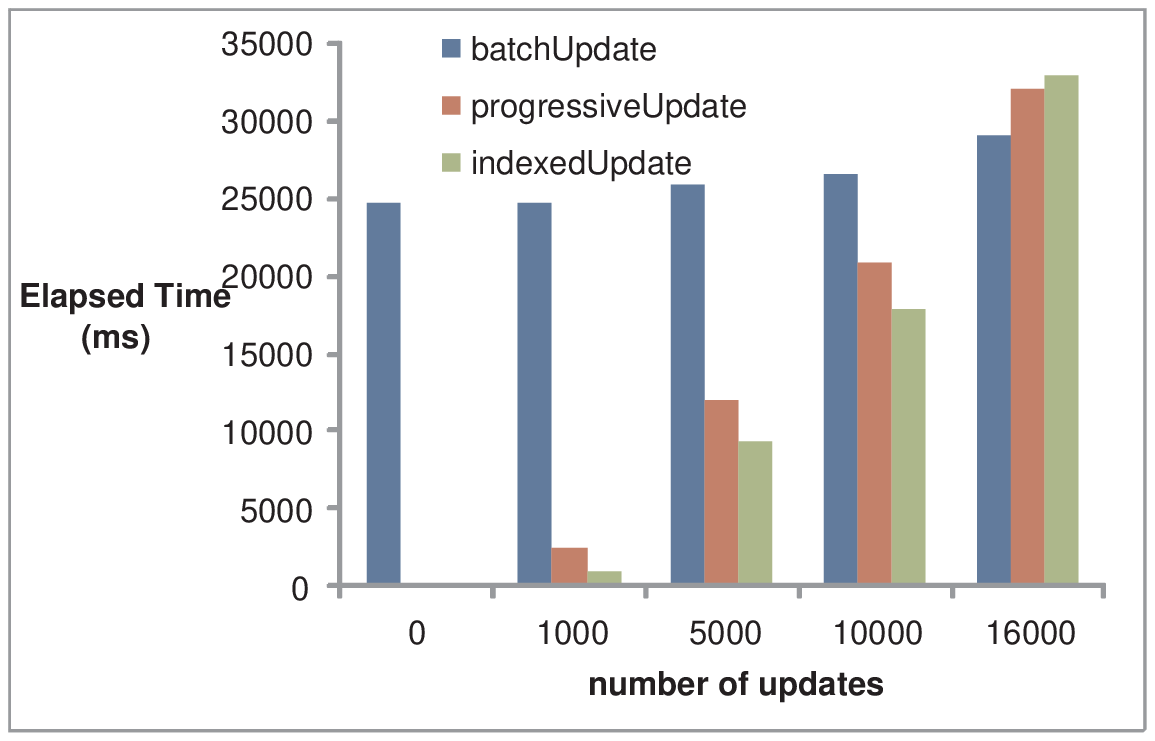}}
 \hskip 0.5in
  \subfigure[$k=25$]{\label{fig:slash25}
    \includegraphics[height=48mm,width=60mm]{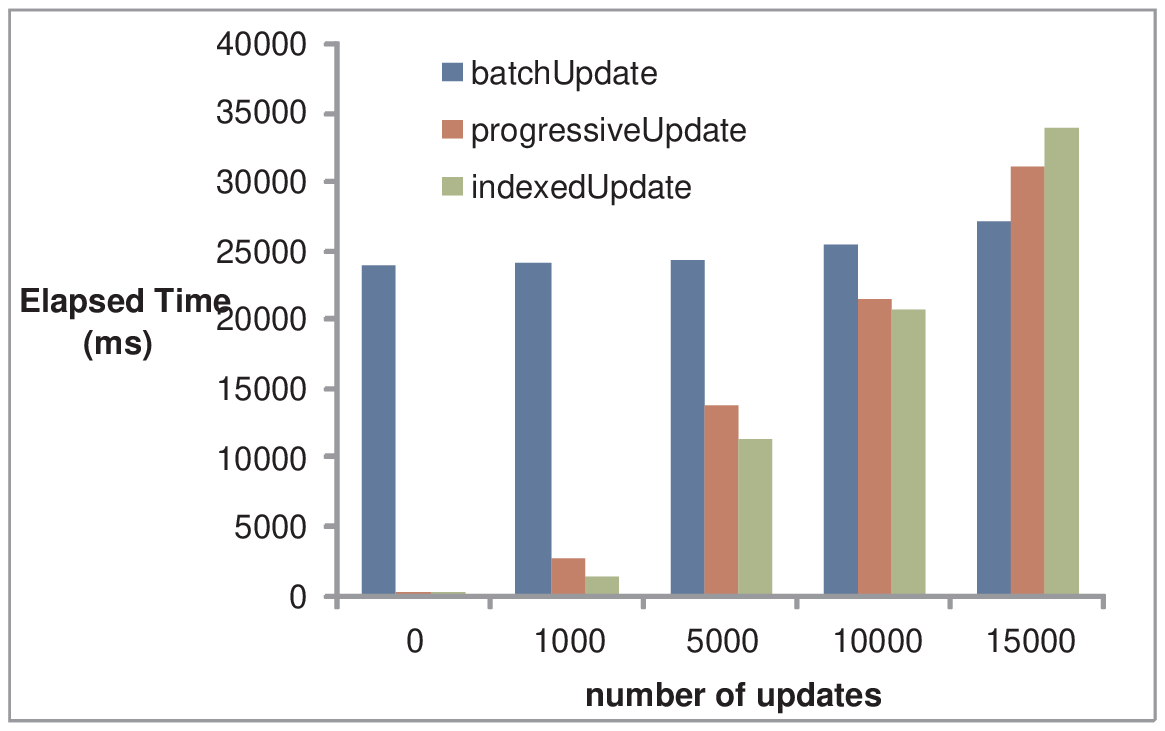}} 
 \hskip 0.5in
  \subfigure[$k=20$]{\label{fig:slash20}
    \includegraphics[height=48mm,width=60mm]{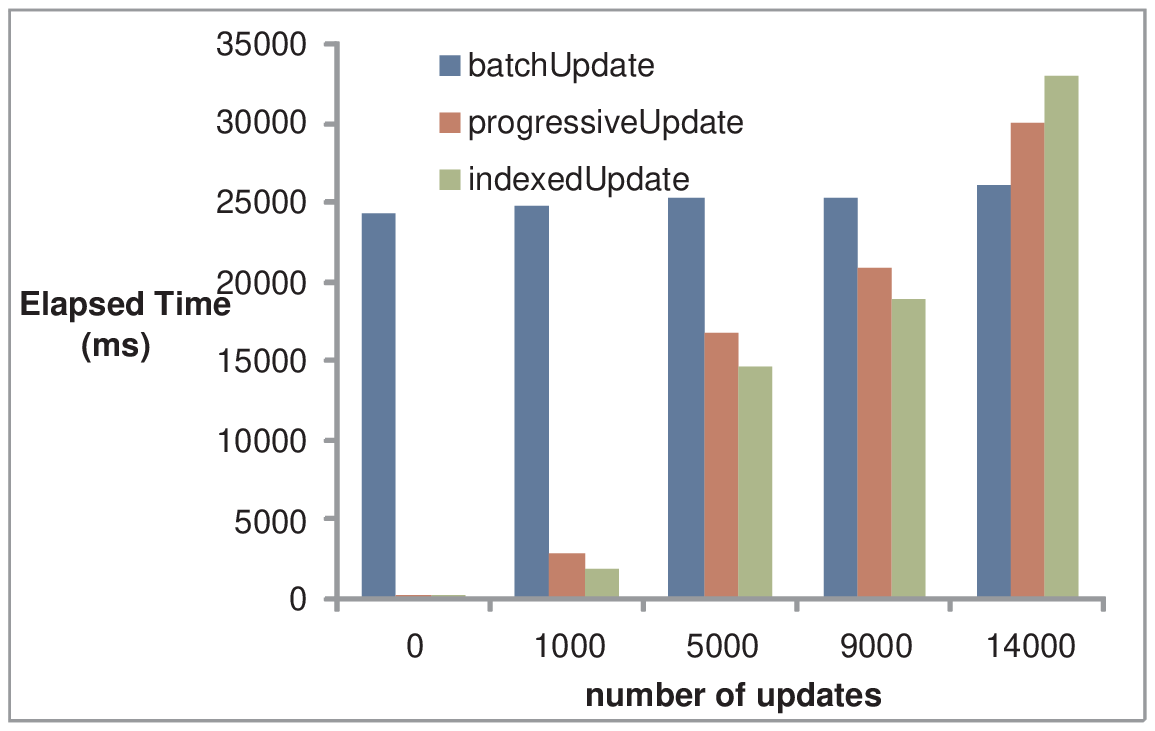}}
 \hskip 0.5in
  \subfigure[$k=15$]{\label{fig:slash15}
    \includegraphics[height=48mm,width=60mm]{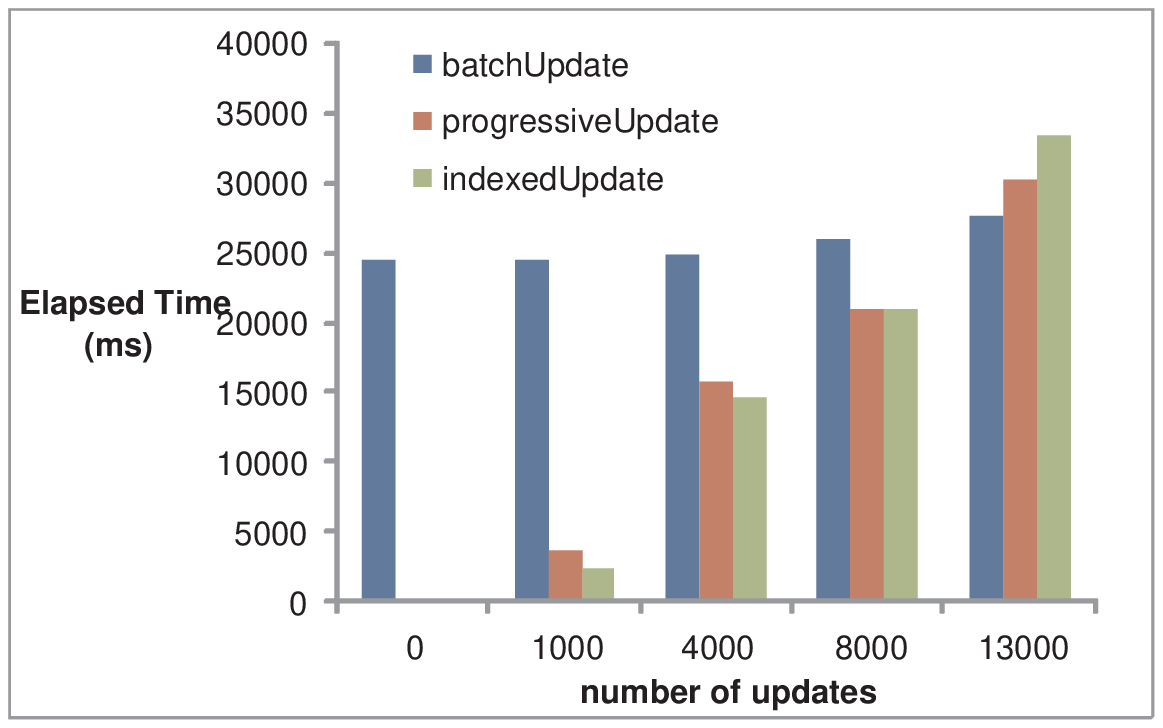}}    
\caption{Slashdot Social Network}
\label{fig:slashdot}
\end{figure*}

\section{Related Work}
\label{sect:related}
 
The concept of ``truss'' was firstly introduced by Cohen~\cite{DBLP:journals/web/truss} in 2008, when social networks developed fast and corresponding research prevailed. Compared with other cohesive subgraph models, such as clique~\cite{RePEc:spr:psycho:v:14:y:1949:i:2:p:95-116}, quasi-clique~\cite{DBLP:conf/latin/AbelloRS02, DBLP:journals/tods/ZengWZK07}, $n$-clique~\cite{n_clique}, $n$-clan~\cite{n_clan}, $n$-club~\cite{n_clan}, $k$-plex~\cite{Seidman1978}, $k$-core~\cite{Seidman1983269}, $k$-truss has its own advantage (mentioned in the introduction already). The computation of $k$-truss, also called truss decomposition, has been studied in~\cite{DBLP:journals/web/truss,Cohen:2009:GTM:1591877.1592018,Wang:2012:TDM:2311906.2311909}. Cohen proposed the first truss-decomposition algorithm~\cite{DBLP:journals/web/truss}, which is later outperformed by an improved in-memory algorithm proposed by Wang and Cheng~\cite{Wang:2012:TDM:2311906.2311909}. Wang and Cheng proposed an out-of-memory algorithm for truss decomposition and a top-$t$ $k$-truss evaluation algorithm~\cite{Wang:2012:TDM:2311906.2311909}. Cohen~\cite{Cohen:2009:GTM:1591877.1592018} also proposed a map-reduce based algorithm for truss decomposition. Recently, Zhao and Tung~\cite{DBLP:journals/pvldb/ZhaoT12} studied the truss decomposition problem and consider that the networked data is stored in a graph database. They also studied how to visualize the graph. Different from all the above works, we do not consider finding the trusses from scratch. We aim to maintain the trusses in face of frequent updates. 

Another closely related work to ours is $k$-core maintenance~\cite{DBLP:journals/corr/abs-1207-4567}. Although the ``maintenance'' theme is shared in these two works, the technical solutions are different, because $k$-core and $k$-truss impose requirements on different aspects. To be specific, $k$-core imposes requirements on the nodes, asking each node to connect to at least $k$ other nodes; while $k$-truss imposes requirements on the edges, asking each edge to be in at least $k-2$ triangles. Moreover, in our work, we also discussed how to build $k$-truss indexes and how to maintain $k$-truss indexes. These corresponding questions were not discussed for the $k$-core counterpart in the work~\cite{DBLP:journals/corr/abs-1207-4567}.    
 
All the models above can be considered as explicit models for discovering cohesive structures from networked data. The common feature is that a structure with certain property is predefined, and then the rest work is to design efficient algorithms to discover all the subgraphs with the structure requirement. Another stream is called implicit models, some proposed objective functions first, such as modulariy~\cite{newman-2004-69}, normalized cut~\cite{DBLP:journals/pami/ShiM00}, and then partitioned the graph into a number of parts where the objective functions can be maximized or minimized; Some works defined neighbourhood distance, such as propinquity~\cite{DBLP:conf/kdd/ZhangWWZ09}, structure closeness~\cite{DBLP:conf/kdd/XuYFS07}, and then grouped nearby nodes within a distance threshold around a given node to form a group; Some borrowed the idea of Markov Clustering~\cite{DBLP:conf/kdd/SatuluriP09, DBLP:journals/jgaa/PonsL06} to repeat \emph{random walk} for a few rounds until self-organized clusters turn up. The maintenance problem for implicit models are not studied as well.

\section{Conclusions and Future Work}
\label{sect:conclusion}

In this paper, we have studied how to maintain trusses in an evolving network by considering edge deletions and insertions. For one edge deletion or insertion, we have investigated the properties of truss change, proposed algorithms to perform truss maintenance and introduced techniques to maintenance truss index updates. In the experiments, we have shown that maintaining truss on-the-fly is more efficient compared to performing batch edge updates and then doing truss decomposition if the update number is not large. We have also shown that having truss indexes and maintaining truss indexes are superior than evaluating truss queries without index support for moderate update frequency. As far as we can see, one direction for future work is to process data that cannot fit into memory, e.g., a $k$-truss may be very large when $k$ is small.   
\bibliographystyle{unsrt}
\bibliography{sigproc}  

\end{document}